\documentclass[twoside]{article}
\pdfoutput=1
\PassOptionsToPackage{authoryear, round}{natbib}
\usepackage[accepted]{aistats2022}
\usepackage{booktabs} 
\usepackage{xfrac}
\usepackage{algorithm}
\usepackage[noend]{algorithmic}
\usepackage{amsmath,amsthm,amsfonts}
\usepackage{amsmath}
\usepackage{amssymb}
\usepackage{color}
\usepackage{mathrsfs}
\usepackage{verbatim}
\usepackage{enumitem}
\usepackage{bm}
\usepackage[pagebackref]{hyperref}
\usepackage[capitalise,noabbrev]{cleveref}
\usepackage{xr}
\usepackage{tikz}
\usepackage{subcaption}
\usepackage{float}
\usepackage{listings}
\usepackage{array}
\usepackage{makecell}

\usepackage[suppress]{color-edits}
\addauthor{ll}{blue}

\definecolor{DarkBlue}{rgb}{0.1,0.1,0.5}
\hypersetup{
	colorlinks=true,       
	linkcolor=DarkBlue,          
	citecolor=DarkBlue,        
	filecolor=DarkBlue,      
	urlcolor=DarkBlue,          
	pdftitle={},
	pdfauthor={},
}

\externaldocument{aistats_supplement}


\newcommand{\cb}{\textcolor[rgb]{1,0,1}{}\textcolor[rgb]{1,0,1}}

\newcommand{\nikhil}{\textcolor[rgb]{0,1,0}{Nikhil: }\textcolor[rgb]{0,0,1}}

\newcommand{\A}{\mathsf{A}}
\newcommand{\G}{\mathsf{G}}

\newcommand{\B}{\mathsf{B}}

\newcommand{\mix}{\textsf{mix}}
\newcommand{\true}{\text{true}}
\newcommand{\pre}{\text{pre}}
\newcommand{\post}{\text{post}}

\newcommand{\potM}{\theta^M}
\newcommand{\potU}{\theta^U}
\newcommand{\Util}{\mathcal{U}}
\newcommand{\cost}{p} 

\newcommand{\swelf}{\mathcal{W}}
\newcommand{\iwelf}{W}
\newcommand{\access}{\mathcal{A}}
\newcommand{\welfgap}{\mathcal{G}}

\newcommand{\soc}{^\mathsf{soc}}
\newcommand{\pri}{^\mathsf{pri}}
\newcommand{\thresG}[1]{\mathsf{thres}_\G(#1)}
\newcommand{\thresA}[1]{\mathsf{thres}_\A(#1)}
\newcommand{\thresB}[1]{\mathsf{thres}_\B(#1)}

\newcommand{\e}[1]{e_{#1}}
\newcommand{\score}[1]{v_{#1}}
\newcommand{\combscore}{v^\alpha}
\newcommand{\f}[1]{f_{#1}}


\newcommand{\argmin}{\mathop{\rm argmin}}
\newcommand{\argmax}{\mathop{\rm argmax}}

\newcommand{\E}{\mathbb{E}}

\newtheorem{theorem}{Theorem}[section]
\newtheorem{definition}{Definition}

\newtheorem{corollary}[theorem]{Corollary}

\newtheorem{remark}{Remark}[section]
\newtheorem{lemma}[theorem]{Lemma}

\usepackage{tabularx}
\usepackage{xcolor}
\usepackage{float,subcaption,placeins}
\usepackage{amsmath}
\usepackage{amsthm}
\usepackage{thm-restate}
\usepackage{float}
\usepackage{enumitem}
\usepackage{xr}
\usepackage{graphicx}
\graphicspath{{./figures/}}
\usepackage{thmtools}
\usepackage{url}
\usepackage{bm}
\usepackage[normalem]{ulem}
\usepackage{balance}
\usepackage{lipsum}
\usepackage{etoc}
\usepackage{comment}


\newcommand{\Comments}{1}
\newcommand{\mynote}[2]{\ifnum\Comments=1\textcolor{#1}{#2}\fi}

\newcommand{\parbold}[1]{\vspace{.25em}\noindent\textbf{#1}}

\usepackage[numbers,sort&compress]{natbib}

\renewcommand\cite[1]{\citep{#1}}

\usepackage[group-separator={,}]{siunitx}
\sisetup{
	round-mode=places,
	detect-mode,
	detect-family,
	detect-inline-family=math,
}

\usepackage{pgfplots}

\pgfmathdeclarefunction{gauss}{2}{%
  \pgfmathparse{1/(#2*sqrt(2*pi))*exp(-((x-#1)^2)/(2*#2^2))}%
}

\usetikzlibrary{patterns}

\pgfplotsset{compat=1.10}
\usepgfplotslibrary{fillbetween}

\usepackage{xcolor}
\definecolor{babyblueeyes}{rgb}{0.63, 0.79, 0.95}
\definecolor{powderblue}{rgb}{0.69, 0.88, 0.9}

\usepackage{chngcntr}
\usepackage{apptools}

%
%


\setlength{\pdfpageheight}{11in}
\setlength{\pdfpagewidth}{8.5in}




\begin{document}

%
\runningtitle{Strategic ranking}

%
\runningauthor{Liu, Garg, Borgs}

\twocolumn[

\aistatstitle{Strategic ranking}

\aistatsauthor{ Lydia T. Liu \And Nikhil Garg \And Christian Borgs }

\aistatsaddress{ University of California, Berkeley
	\And  	Cornell Tech and the Technion \And University of California, Berkeley} 
]

\begin{abstract}
	Strategic classification studies the design of a classifier robust to the manipulation of input by strategic individuals.
	However, the existing literature does not consider the effect of \emph{competition} among individuals as induced by the algorithm design. Motivated by constrained allocation settings such as college admissions, we introduce \textit{strategic ranking}, in which the (designed) individual reward depends on an applicant's post-effort rank in a measurement of interest. Our results illustrate how competition among applicants affects the resulting equilibria and model insights. We analyze how various ranking reward designs\lledit{, belonging to a family of step functions,} trade off applicant, school, and societal utility, \lledit{as well as} how ranking design counters inequities arising from disparate access to resources. In particular, we find that
	randomization in the reward design can mitigate two measures of disparate impact, welfare gap and access.

\end{abstract}

\section{INTRODUCTION}

Many allocative decisions---from education to employment---rely on relative quality across individuals, not absolute quality: Berkeley accepts $\approx 15\%$ of college applicants per year, major CS conferences accept about $\approx20\%$ of submissions, and each job opening is filled by one candidate. Even if every applicant improves substantially, due to (perceived) capacity constraints the number accepted would not commensurately increase. The comparative aspect of \textit{ranking} differs sharply from \textit{classification}, which evaluates each entry in isolation. While this difference is often overlooked in the ML community, we illustrate that it is consequential and motivates \emph{access} as a measure of disparate impact from algorithmic decisions.


Our specific context is the literature on \textit{strategic classification}, which aims to address gaming by applicants controlling the classifier's inputs~\cite{bruckner2012static,Hardt:2016:SC}\lledit{---an instance of Goodhart's law}. For example, if a predictive classifier for admissions learns that a student's number of extracurricular activities correlates with college achievement and thus bases admissions decisions on it, students may list many more activities on their applications without devoting any time to them. A strategic classifier aims to undo this effect, \lledit{informally speaking,} by shifting weight towards features that are costly to game
	, such that the designer's utility is maximized even taking into account strategic behavior---as in a Stackelberg equilibrium.

Importantly, many motivating applications in the strategic classification literature, such as manipulating test scores to gain college admission, are  ranking problems (or classification with capacity constraints on how many can be classified with each label). This aspect induces competition between applicants, affecting their effort and in turn their comparative performance and ultimate relative position in the ranking. To study the interaction of these effects, this work introduces the problem of \textit{strategic ranking}:
{we study the general welfare effects of reward design for ranking, and examine the role of randomization in trading off designer's utility and population welfare, as well as reducing disparate impact.
	We study such \textit{competition} between applications by shifting the task from classifier design to \emph{ranking reward design}, and so fill a gap in the current literature on algorithmic fairness and strategic behavior that has thus far focused on classification (and regression).



\lledit{In pursuit of these questions, our theoretical framework recalls the long-standing economics literature on \textit{contests}, in which agents expend effort to obtain a reward that is a function of their relative performance across agents \cite{bodoh2018college,olszewski2016large,olszewski2019bid,olszewski2019pareto}, as well as that on signaling and strategic behavior \cite{spence1978job}. We elaborate on the connections and distinctions with this literature in Section~\ref{sec:rel}.}

In our strategic ranking framework, a designer ranks applicants by a single, observed measure, called a score. {The score is a function of the applicant's (possibly multi-dimensional) chosen effort level(s), latent skill(s), and environment.} The designer controls the reward $\lambda(\theta)$ assigned to each rank $\theta$, under an overall reward constraint; e.g., in our primary interpretation, the reward is a probability at which an applicant of each rank is admitted. \lledit{Realistically, there may be complexity constraints on the ranking reward function $\lambda$ that the designer can deploy---in this work, we analyze settings where $\lambda$ belongs to a family of step functions that we call ``K-level policies''.} In response, the applicants choose their (costly) {effort level} for the observed measure, such that in equilibrium{---that is, given the effort levels of the entire applicant population---}their effort level and resulting rank maximizes their welfare (reward minus effort cost). Depending {on} the setting, the designer's objective may be a function of applicant effort and who is admitted. \lledit{We assume that applicant effort improves the designer's utility function, but not their own welfare except indirectly through the ranking reward; in other words, applicants are not intrinsically motivated to exert effort.} Using this model, our contributions are:

\textbf{First} (\Cref{sec:modelequil}), we analyze the equilibrium behavior and resulting optimal designs, illuminating important differences between the classification and ranking settings. We show that, in a general setting and with any reasonable reward function $\lambda$, the competition effect results in \textit{rank preservation}: in equilibrium, the ranks (and thus rewards) of applicants after applicant effort is the same as before. This finding differs from strategic classification -- where one must adapt to the effort it induces to maintain accuracy~\cite{Hardt:2016:SC} -- and simplifies equilibrium analysis.

\textbf{Second} (\Cref{sec:welfare}), we study how the design of reward function $\lambda$ differentially affects the welfare of various stakeholders: applicants, a school preferring to admit those with the highest score, and a social planner maximizing the score over the population. We find, e.g., there is an trade-off even in one skill dimension: while deterministically admitting the highest ranked students maximizes the school's utility (among \textit{two}-level policies but not generally), it leads to applicants exerting costly effort. Adding randomness reduces pressure for applicants at the expense of the school's utility.

\textbf{Third} (\Cref{sec:environment})  we analyze equilibria in the presence of structural inequities between different groups, finding that competition amplifies such inequities without careful reward design: designing reward $\lambda$ to increasing school utility increases the welfare gap between groups, and decreases \textit{access} (group specific admission probability) for the disadvantaged group.

Our definitions of school utility and group access---natural in constrained allocation settings---have not been studied in unconstrained settings, where policies may admit different numbers of applicants. We also extend our model to a multi-dimensional setting in \Cref{sec:multi-dim}. 

 \lldelete{Thus, our framework allows for direct comparisons of participant welfare (based on outcomes) at a societal level, beyond ``narrowly bracketed'' notions of algorithmic fairness~\citep{kasy2021fairness}.}



\subsection{Related work}\label{sec:rel}
Our work sits at the intersection of work in two communities,
\textit{strategic classification} and \textit{fair machine learning} in computer science, and \textit{contests} and \textit{effort} in economics. 
%
%
\lldelete{Here, we briefly describe the most related works, with an extended related works section in the Appendix.}




\parbold{Strategic classification} As in strategic classification~\cite{Hardt:2016:SC,bruckner2012static,dong2018strategic}, we consider the challenge an institution faces when deploying a classifier that applicants can game; a naive classifier that does not factor in the resulting distribution shifts would be inaccurate. Unlike strategic classification, which considers the manipulation of observable features to be unproductive from the institution's perspective, our model of efforts stipulates that the school prefers higher effort levels as they give rise to higher scores.  We follow the line of work considering the design of reward functions that further incentivize agent effort on productive tasks~\cite{kleinberg18investeffort, miller2020strategic,bechavod2021gaming,shavit2020causal}. We in particular draw inspiration from and compare to several recent directions in the strategic classification literature: with multiple agents \citep{haghtalab2020maximizing,alon2020multiagent}, when a classifier may be random \citep{braverman2020role}, and with fairness concerns \citep{hu2019disparate,Milli2019social}. Like other strategic classification work, our work is further related to the prior economics literature on contract design, elaborated in \Cref{sec:multi-dim}. 

\parbold{Economics: Contests and modeling effort} Our theoretical analysis is similar to that in the literature on \textit{contests} in economics~\cite{barut1998symmetric,konrad2007strategy,connelly2014tournament, bodoh2018college,olszewski2016large,olszewski2019bid,olszewski2019pareto,fang2020turning}; e.g., our rank preservation result in \Cref{sec:equillemmas} reflects assortative allocation results in the literature under similar conditions. The field is too extensive to summarize here, so we refer the reader to surveys~\cite{corchon2007theory,fu2019contests} and discuss the works closest to ours.
\citet{bodoh2018college} develop a model with students endogenously choosing effort; they use the model to compare various affirmative action schemes. 
Most related is work by~\citet{olszewski2016large,olszewski2019bid,olszewski2019pareto}. \citet{olszewski2019pareto} find that policies that ``pool'' individuals into tiers (as opposed to continuous ranks) can be Pareto improving for students; even those with decreased individual rewards would benefit from decreased competitive pressure.

Our work departs from the contests literature through its focus on the questions most common in the strategic classification: while the former primarily considers the efforts and resulting welfare of participants, we study how the design of the reward function $\lambda$ differentially affects the designer's utility, applicant welfare, and fairness metrics. In particular, when $\lambda$ is interpreted as a probability of admission, our designer faces trade-offs with incentivizing overall effort and admitting the most skilled students (\Cref{sec:welfare}); with disparate access, the designer must further navigate trade-offs between fairness and inducing effort (\Cref{sec:environment}); with multiple score dimensions, the designer must ensure that applicants do not excessively game one dimension at the cost of the other (\Cref{sec:multi-dim}). Simultaneous to us, \citet{elkind2021contest} consider the effect of contest design on participant and designer welfare.

Beyond contests, our work connects to the economics literature on effort and subsequent reward \citep{becker1973theory, spence1978job,roemer1998equality,calsamiglia09decent}. 

\parbold{FATE in machine learning and MD4SG} We broadly connect to the Fairness, Accessibility/Accountability, Transparency, and Ethics in machine learning \cite{Chouldechova2018TheFO} and the Mechanism Design for Social Good literatures \cite{abebe2018mechanism}. Most relevant are works on fair ranking \cite{mathioudakis2020affirmative,zehlike2017fa,zehlike2020reducing,tabibian2020design}, constrained allocation \cite{aziz2020developments,caifair2020,golz2019paradoxes,noriega2019active}, and admissions \cite{faenza2020impact,garg2020dropping,immorlica2019access,kannan2021best,liu2020disparate,Hu2018shortterm,Mouzannar2019socialequality,kannan2019downstream,liu2018delayed,rolf2020balancing,liu2021test}. We consider such effects as they interact with agents' strategic responses to the mechanism.

\section{\uppercase{Model and equilibrium}}
\label{sec:modelequil}
Each \textit{applicant} has an (unobserved) skill level and so a pre-effort \textit{rank}. \lledit{The applicants are modeled as a continuum of players \cite{Schmeidler1973}, rather than a finite number of atoms; informally, applicants do not respond to the strategies of every other applicant combinatorially, but rather the applicant population as a whole. }
	 Applicants choose effort levels, resulting in post-effort scores and ranks. A single \textit{school} determines rewards for each post-effort rank level, thus affecting applicant incentives to choose their effort. 


\subsection{Model}
\label{sec:model}


\parbold{Applicants}
There is a unit mass of \textit{applicants}, indexed by an observed index $\omega \in [0,1]$ distributed uniformly.\footnote{The index $\omega$ should be interpreted as each applicant's ``name,'' uncorrelated with skill, used solely for tie-breaking.}  Each applicant has a latent (unobserved) skill level represented by some measurable function of $\omega$.
We assume that {the distribution of the skills} has no atoms and that the CDF of this distribution is strictly increasing.
Using the CDF to map the skill of an applicant to a rank in~$[0,1]$, each applicant gets an (unobserved) rank $\theta_\pre =\theta_\pre(\omega)$ which by our assumption on the CDF is again uniformly distributed in $[0, 1]$ (the higher the rank the better).  With this setup, the skill of an applicant with rank $\theta_\pre$ can be written as $f(\theta_\pre)$ where $f$ is a strictly increasing, continuous function.  {It will be notationally convenient to label applicants by their rank $\theta_\pre$, though the reader should note that in contrast to $\omega$, $\theta_\pre(\omega)$ is assumed to be unobservable.}


Each applicant chooses an \textit{effort} level $e \ge 0$, the result of which is an observed, post-effort \textit{score}, {$v=v(e,\theta_\pre) = {g(e)}\cdot{f(\theta_\pre)}$}. \lledit{In other words, we assume the post-effort score is the product of two components, associated with the effort and the pre-effort skill respectively.}
{We assume} that the effort transfer function $g: [0, \infty) \mapsto [0, \infty)$ is a continuous, concave, strictly increasing function, representing that marginal effort improves one's score but has diminishing returns.
{
{T}he strategies of the applicants can {then} be described by a function $\theta_\pre \mapsto e(\theta_\pre)$.
}
  Each {applicant} is then ranked according to their score $v$
, resulting in a \textit{post-effort rank} $\theta_\post$.
Note that the ranking $\theta_\post$ is slightly less trivial than a ranking of the skills, since the scores might have ties, which have to be resolved.

{\textbf{Tie Breaking.}
Given a choice of strategies $\theta_\pre \mapsto e(\theta_\pre)$, let $F$ be the CDF of the scores $v(e(\theta_\pre),\theta_\pre)$.
Since atoms for the distribution of $v$ would lead to ties for ranks defined as $F(v)$, we will use the labels of the applicants to break ties with the help of a (publicly announced) tie-breaking function $\Gamma(\omega)$, defined, e.g., via a collision free hash of the applicant names. Here  we require  that $\Gamma$  is a measurable function from $[0,1]$ to $[0,1]$ that maps different applicant labels to different values.
We then use $\Gamma$ to resolve the atoms of $F$, leading to a ranking function $v\mapsto \gamma(\omega, v)$ which is equal to $F(v)$ except when $v$ is an atom of the score distribution, {in which case it takes values in the ``gap interval'' $[F_-(v),  F(v)]$, where $F_-(v)$ is the left limit of $F$ at $v$.}  We  construct $\gamma$ in such a way that it
gives the uniform distribution for $\theta_\post(\omega)$ if we set  $\theta_\post(\omega)=\gamma(\omega,v(e(\theta_\pre(\omega)),\theta_\pre(\omega)))$.\footnote{See Remark~\ref{rem:gamma} for details on this construction.}
}



\parbold{Ranking designer (school)} A single \textit{school} is admitting applicants, based on their ranking. In particular, the school can choose a ranking reward function $\lambda : [0, 1] \mapsto [0, 1]$, such that an applicant with post-effort rank $\theta_\post$ is admitted with probability $\lambda(\theta_\post)$. We assume that $\lambda$ is non-decreasing and that the school has a constraint on the overall probability, such that in expectation it admits a number of applicants equal to {a} capacity constraint  $\rho \in (0,1)$, i.e., \lledit{$\E_{\theta_\post}[\lambda(\theta_\post)] = \rho $}.\footnote{\lledit{We use $\E_{\theta_\post}[\cdot]$ to denote an integral over $\theta_\post$ (and $\E[\cdot]$ to denote an integral over $\omega$)} with respect to the Lebesgue measure; informally, this can be thought of as averaging over the applicant population.} We may also refer to $\lambda$, informally, as the admission policy.

For simplicity, we further assume that $\lambda$ is a step-function with $K$ distinct levels $\ell_0 < \dots < \ell_{K-1}$, and $K-1$ cut-points parameterized by $c_1< \dots < c_{K-1}$ (with $c_0 = 0$, $c_K = 1$). 
In other words, we have $\lambda(\theta) = \ell_k$, for all $\theta \in \psi_k \triangleq [c_{k}, c_{k+1})$. Thus, applicants in the same post-rank interval $\theta \in \psi_k$ receive the same reward.

%
%
%

\parbold{Individual applicant welfare and equilibrium} Given the designer's function $\lambda$ and the effort levels of other applicants, each applicant chooses effort $e$ to maximize their individual welfare, 
\begin{equation*}
	\iwelf(e, \lambda(\theta_\post)) = \lambda(\theta_\post) - \cost(e),
\end{equation*}
where the effort cost function $\cost$ is non-negative, continuous, and strictly convex {on $[0,\infty)$, with $e_0:=\argmin_e \cost(e)$ and $p(e_0) = 0$.}

Applicants are assumed not to personally benefit from increasing their score $v$, except through the corresponding increase in their rank and reward. While the definition of $\cost$ does not preclude the applicant receiving any intrinsic benefit from exerting effort, we assume that the \emph{net} benefit from effort is non-positive.

After a school chooses its ranking reward function $\lambda$, each applicant chooses their effort level. However, unlike in strategic classification, in the ranking setting applicants must further take into account the effort levels (and resulting post-effort values) of other applicants. 
An \textit{equilibrium} of effort levels is
{then an assignment $\theta_\pre\mapsto e(\theta_\pre)$ of effort levels} and resulting post-effort rank rewards
in which given the efforts of other applicants, no applicant can increase their welfare by changing their effort.  
{This is formalized in the following definition.}

\begin{definition} [Equilibrium]\label{def:equi}
{Given a}
tie-breaking function $\Gamma(\omega)$
%
{and}
	a ranking probability function $\lambda$, an \textit{equilibrium} is a set of effort levels and post-effort ranking rewards for each applicant, $\{e(\theta_\pre),  \lambda(\theta_\post(\theta_\pre))\}$ such that, for all
	$\omega$, 
	\begin{align*}
		e(\theta_\pre(\omega)) &\in \argmax_{e} \iwelf\left(e, \lambda\left(\gamma\left(\omega, v\left(e,\theta_\pre(\omega)\right)\right)\right)\right) 
		\\
		\theta_\post(\theta_\pre(\omega)) &= \gamma(\omega, v(e(\theta_\pre(\omega)),\theta_\pre(\omega))),
	\end{align*}
	{where $\gamma(\omega,v)$ is the ranking induced by the CDF of the scores resulting from effort levels $\{e(\theta_\pre)\}$ and the tie-breaking function $\Gamma$.\footnote{See Remark~\ref{rem:gamma-effort} on $\gamma$ and the set of efforts.
	}
	}
\end{definition}
{In equilibrium, the strategy of the applicants is
{thus} characterized by a collection of efforts $e(\theta_{pre})$ and post-effort ranking rewards $\lambda(\theta_{post}(\theta_{pre}))$ of the applicants as a function of $\theta_{pre}$, with a corresponding joint distribution for $e$ and $\theta_{post}$ induced by the equilibrium and the underlying uniform distribution over $\theta_{pre}$.\footnote{See Remark~\ref{rem:design-opt} on credible commitment by the school.
}
} Intuitively, we need to define index $\omega$ and the tie-breaking ordering $\Gamma$ because while there may be ties in post-effort values, post-effort \textit{ranks} must be unique. However, for notational ease, in the rest of this work we drop the index $\omega$ and instead refer to applicants by their pre-effort rank $\theta_\pre$. Our results hold for any ordering $\Gamma$, and so we further omit it.


\parbold{Aggregate welfare and utility} We define three aggregate welfare functions of the equilibrium efforts and scores to capture the interests of different stakeholders.
The \textbf{applicant welfare} is defined as the population average of the individual applicant welfare at equilibrium:
\begin{align*}
	\swelf :=& \E[\iwelf(e,\lambda(\theta_\post))]	= \rho - \E[\cost(e)]. 
\end{align*}
On the other hand, society derives value from the scores of applicants post-effort, leading to the following \textbf{societal utility}:$\quad\quad  	\Util\soc:= \E[ v].$

In other words, society prefers the entire applicant population to achieve higher scores, not only those who are admitted to the school, since higher test scores are correlated with labor productivity and economic growth \citep{hanushek2010high}.

Finally, the design $\lambda$ is controlled by a school, who may only draw value from those who enroll. The school's \textbf{private utility} is the expected score of admitted applicants,
which in our continuum formulation is the expectation of $v$ weighted by $\lambda(v)$: 
\(	\Util\pri:= \E[v\cdot \lambda(\theta_\post)]\).


\parbold{Discussion} Our three welfare functions represent the utilities of three stakeholders in any ranking setting: the applicants, ranking institution, and broader society. As we show in Section~\ref{sec:welfare}, optimal design differs substantially for the three -- the school maximizing its own utility comes at a cost to the others. Furthermore, we note that this comparison is possible \textit{because} our setting is one of constrained allocation, in which the expected number of admitted applicants (integral over $\lambda$) is fixed. For example, defining school's private utility as the expected score of admitted applicants (and applicant welfare as containing admissions probability) is unnatural without such a constraint, as various mechanisms may admit (classify as `1') different numbers of applicants; for this reason, strategic classification papers often consider classification accuracy as the school's objective, which has no analogue in practice. \lldelete{We believe that this shift is an important aspect of considering the welfare implications of mechanisms beyond ``fairness'' characterized by statistical notions, as, e.g., recently advocated by~\citet{kasy2021fairness}.}


Our base model is purposely minimalist, to emphasize the competitive aspect of ranking and constrained allocation. We consider extensions to disparate access to resources between socioeconomic groups (\Cref{sec:environment}) and multi-dimensional, potentially unobserved scores (\Cref{sec:multi-dim}) after studying welfare under our base model.



\subsection{Equilibria characterization}
\label{sec:equillemmas}

The definition of equilibria suggests that studying their properties may be difficult: our utility functions depend on both the efforts of applicants and their induced relative rankings, the relationship of which may be complex in general.
%
 However, the following result, reminiscent of much of the contests literature under similar assumptions (see Remark~\ref{rem:contests}), establishes that
 ranking rewards are preserved under effort.

\begin{restatable}[Rank preservation]{proposition}{lemassortative}
	\label{lem:rankpreservedindex}
In every equilibrium, $ \lambda(\theta_\post(\theta_\pre))=\lambda(\theta_\pre)$, up to 
{sets of} measure 0. 
\end{restatable}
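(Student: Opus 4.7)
My plan is to first show that $\theta_\pre \mapsto \lambda(\theta_\post(\theta_\pre))$ is non-decreasing a.e., and then promote this to an a.e.\ equality $\lambda(\theta_\post(\theta_\pre)) = \lambda(\theta_\pre)$ using that $\theta_\pre$ and $\theta_\post$ are both uniform on $[0,1]$.

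For the monotonicity step, I will suppose for contradiction that on a positive-measure set, $\theta_1 < \theta_2$ but $\lambda(\theta_\post(\theta_1)) = \ell_k > \ell_{k'} = \lambda(\theta_\post(\theta_2))$, with equilibrium efforts $e_1, e_2$ and scores $v_i = g(e_i) f(\theta_i)$. Let $v_k^* := \inf\{v : F(v) \geq c_k\}$ be the score threshold for tier $k$; membership of $\theta_1$ in tier $k$ gives $v_1 \geq v_k^*$. I will then analyze two single-applicant deviations, which leave $\gamma$ unchanged as measure-zero perturbations of the effort profile. (i) $\theta_2$ deviates to $\tilde e_2 := g^{-1}(v_k^*/f(\theta_2))$, scoring $\geq v_k^*$ and hence receiving reward $\geq \ell_k$; the no-deviation condition for $\theta_2$ yields $p(\tilde e_2) - p(e_2) \geq \ell_k - \ell_{k'} > 0$. (ii) $\theta_1$ deviates to $e_2^* := g^{-1}(g(e_2) f(\theta_2)/f(\theta_1))$, scoring at most $v_2$ and hence receiving reward at most $\ell_{k'}$; the no-deviation condition for $\theta_1$ yields $\ell_k - \ell_{k'} \geq p(e_1) - p(e_2^*)$. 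Chaining these two gives $p(e_2^*) - p(e_2) \geq p(e_1) - p(\tilde e_2)$.

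To derive a contradiction, set $\rho := f(\theta_2)/f(\theta_1) > 1$ and $h := p \circ g^{-1}$. By construction $g(e_2^*) = \rho g(e_2)$, and $v_1 \geq v_k^*$ gives $g(e_1) \geq v_k^*/f(\theta_1) = \rho g(\tilde e_2)$, so the chain rewrites as $h(\rho g(e_2)) - h(g(e_2)) \geq h(\rho g(\tilde e_2)) - h(g(\tilde e_2))$. However $h$ is strictly convex on the relevant range (the convex, increasing $g^{-1}$ composed with the strictly convex $p$, using that all equilibrium efforts are $\geq e_0$ since any $e < e_0$ is strictly dominated by $e_0$), and deviation (i) has already forced $\tilde e_2 > e_2$, hence $g(\tilde e_2) > g(e_2)$. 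Strict convexity then makes $a \mapsto h(\rho a) - h(a)$ strictly increasing, giving a strict reversal of the chain and a contradiction. Hence $\theta_\pre \mapsto \lambda(\theta_\post(\theta_\pre))$ is non-decreasing a.e.

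For the final step, $\theta_\post$ is uniform on $[0,1]$ by the tie-breaking construction of $\gamma$, so $\lambda(\theta_\post(\cdot))$ has the same distribution as $\lambda$ on $[0,1]$: each value $\ell_k$ is taken on a set of measure $c_{k+1} - c_k = |\psi_k|$. A non-decreasing function on $[0,1]$ with this step-size distribution must coincide with $\lambda$ off a null set, giving the claim. The main obstacle is the careful tie-breaking bookkeeping at atoms of $F$: when a deviation score lands exactly on $v_k^*$ or on $v_2$, the rank can shift within the associated gap interval $[F_-(\cdot), F(\cdot)]$, potentially breaking the clean reward bounds. I would handle this by first perturbing the deviation score by $\pm\epsilon$ (using $\tilde e_2 + \epsilon$ and $e_2^* - \epsilon$), applying the equilibrium condition, and then letting $\epsilon \downarrow 0$ via continuity of $p$ and $g$; this is notationally heavy but routine, and I would set it up once up front before running the main comparison.
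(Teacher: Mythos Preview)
Your proposal is correct and follows essentially the same two-step strategy as the paper: a deviation/exchange argument (two applicants with inverted rewards yield a contradiction with the convex cost structure) to establish that $\theta_\pre \mapsto \lambda(\theta_\post(\theta_\pre))$ is monotone, followed by a measure-preserving step using uniformity of $\theta_\post$ to upgrade to a.e.\ equality. The paper's version uses symmetric swap deviations (each applicant targets the other's exact score) and separates the contradiction into two auxiliary lemmas exploiting concavity of $g$ and convexity of $p$ individually, whereas you target the tier threshold $v_k^*$ and package the convexity into $h = p \circ g^{-1}$---these are minor reorganizations of the same underlying mechanism, and your $\epsilon$-perturbation plan plays the role of the paper's preliminary ``ties do not matter'' lemma.
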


As shown in the next section, rank preservation simplifies substantially the evaluation of various utility functions, as the post-effort rankings of applicants are fixed and known.

The result follows from the shared cost function $\cost$ being convex and increasing on $[e_0, \infty)$: if an applicant pre-effort rank $\theta$ finds it optimal to achieve post-effort value $v$, then each applicant with pre-effort rank $\bar\theta > \theta$ finds it optimal to reach a post-effort value $\bar v\ge v$. We note that this result (and all our subsequent results) hold for any tie-breaking function 
{$\Gamma$,}
as any set of post-effort values in which tie-breaking occurs across two or more reward bands $\psi$ cannot constitute an equilibrium. 

\begin{figure}[t]
	\centering
	\includegraphics[width=0.45\textwidth]{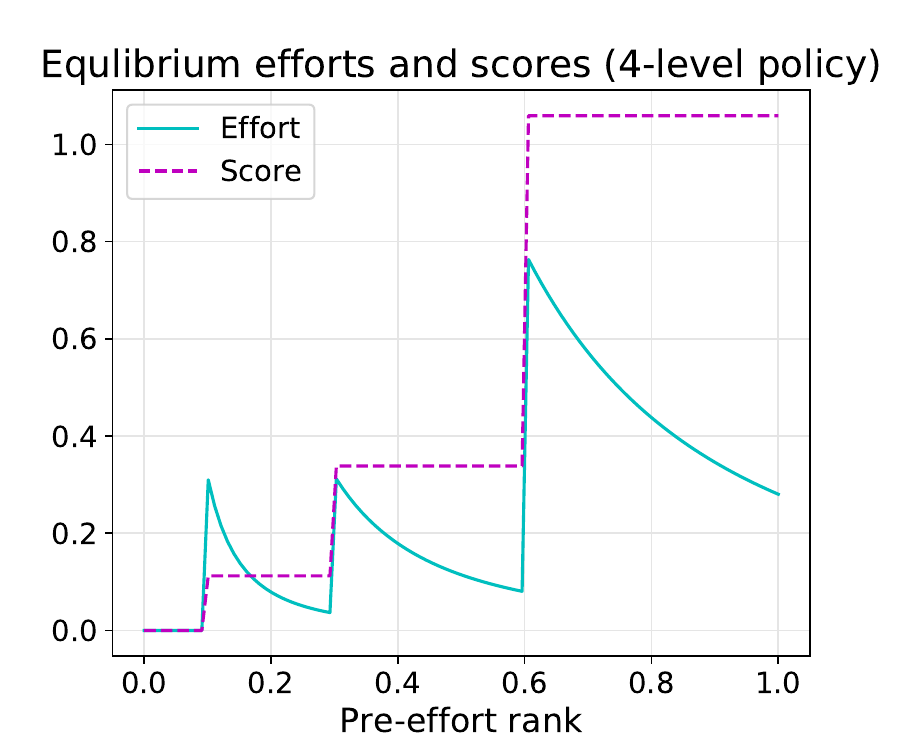}
	\caption{Equilibrium efforts and scores for a sample 4-level ranking reward function.}\label{fig:eqm}
\end{figure}

%

%

While the ranking reward function $\lambda$ does not affect the induced rankings of applicants, it does determine the \textit{effort} exerted by each, as formalized next.

\begin{restatable}[Second price effort]{theorem}{lemeffort}
	\label{lem:effort}
	There exists a equilibrium such that $\lambda(\theta_\post(\theta_\pre))=\lambda(\theta_\pre)$ and applicants with $\theta_\pre \in \psi_k\triangleq{ [c_{k}, c_{k+1})}$ exert effort $e_k(\theta_\pre)$, where 
	\begin{align*}
		e_k(\theta_\pre) &= \begin{cases}
			e_0 & \text{for } k = 0\\
			\max\left(g^{-1}\left(\frac{g(\tilde e_{k-1})\cdot f(c_{k})}{f(\theta_\pre)} \right), e_0\right) & \text{otherwise,}
		\end{cases}
	\end{align*}
	\label{lem:secondpricecontimuum}
	with $\tilde e_{k-1}{\geq e_{k-1}(c_k)}$ {inductively defined by} $$
	{\cost(\tilde e_{k-1})=\cost(e_{k-1}(c_k))+\ell_k-\ell_{k-1}}
		$$
		The equilibrium is unique up to sets of  measure 0. 
\end{restatable}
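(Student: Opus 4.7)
The plan is to leverage Proposition \ref{lem:rankpreservedindex} (rank preservation) to reduce the problem to a tier-by-tier analysis, then build the effort profile by induction from the bottom tier up, and finally argue both existence and uniqueness of the equilibrium.

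Step 1 (setup via rank preservation). By Proposition \ref{lem:rankpreservedindex}, in any equilibrium $\lambda(\theta_\post(\theta_\pre))=\lambda(\theta_\pre)$ almost everywhere, so applicants with $\theta_\pre\in\psi_k$ receive reward $\ell_k$. Since $\lambda$ is constant on $\psi_k$, within a tier applicants gain nothing from improving their internal rank and so choose the smallest effort that keeps them in tier $k$ given the scores of everyone else. For the bottom tier ($k=0$), no tier lies below to defend against, and the cost is minimized uniquely at $e_0$, giving $e_0(\theta_\pre)\equiv e_0$.

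Step 2 (inductive construction and the role of $\tilde e_{k-1}$). Assume the effort profile $e_{k-1}(\cdot)$ in tier $k-1$ has been defined. The binding threat on tier $k$ is the applicant at the top of tier $k-1$ (i.e.\ $\theta_\pre\to c_k^-$), whose skill is essentially $f(c_k)$. This applicant would prefer to jump into tier $k$ iff they can do so at cost $p(e')-p(e_{k-1}(c_k))<\ell_k-\ell_{k-1}$. The indifference threshold is exactly $\tilde e_{k-1}$ as defined by $p(\tilde e_{k-1})=p(e_{k-1}(c_k))+\ell_k-\ell_{k-1}$; strict convexity and $p(e_0)=0$ give $\tilde e_{k-1}>e_{k-1}(c_k)\ge e_0$. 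To deter the invasion, tier-$k$ applicants must collectively achieve scores at least $s_k:=g(\tilde e_{k-1})f(c_k)$; since the reward in $\psi_k$ is constant, exerting any more is strictly wasteful. So the minimum effort for $\theta_\pre\in\psi_k$ that still yields score $\ge s_k$ is $e_k(\theta_\pre)=\max\bigl(g^{-1}(s_k/f(\theta_\pre)),\,e_0\bigr)$, recovering the theorem's formula.

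Step 3 (verifying equilibrium). I would check no profitable deviation exists: (i) a downward deviation within tier $k$ drops the score below $s_k$, which under the constructed CDF places the applicant strictly below the tier-$k$ mass (because $s_k>s_{k-1}=\lim_{\theta\to c_k^-}g(e_{k-1}(\theta))f(\theta)$, using $\tilde e_{k-1}>e_{k-1}(c_k)$) and hence out of $\psi_k$; (ii) an upward deviation into tier $k+1$ requires score $\ge s_{k+1}$, and by the indifference definition of $\tilde e_k$ and the monotonicity of $f$, the cost differential exceeds the reward differential $\ell_{k+1}-\ell_k$ for every $\theta_\pre\in\psi_k$ (most tightly at $\theta_\pre=c_{k+1}^-$, which is exactly the indifference point); (iii) any effort strictly above $e_k(\theta_\pre)$ within the tier is dominated by $e_k(\theta_\pre)$ since the reward is unchanged and cost is strictly higher.

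Step 4 (uniqueness up to measure zero). Suppose another equilibrium had, on a positive-measure subset of $\psi_k$, effort strictly below $e_k(\theta_\pre)$: then the corresponding scores fall strictly below $s_k$, and a top-of-tier-$(k-1)$ applicant strictly prefers the upward deviation, contradicting the equilibrium property. Conversely, effort strictly above $e_k(\theta_\pre)$ on a positive-measure set is strictly dominated as in (iii). Hence the profile is pinned down a.e., with boundary points $\{c_k\}$ and tie-breaking choices forming a null set. The main obstacle I anticipate is step 3(i): carefully translating a small score drop into a provable exit from $\psi_k$ when the tier-$k$ score distribution is an atom at $s_k$. This requires showing that the tie-breaking function $\gamma$, which spreads the atom uniformly over $\psi_k$, places any applicant with score strictly below $s_k$ outside the atom's support, and hence below $c_k$; combined with rank preservation, the deviator then receives $\ell_{k-1}$, making the deviation unprofitable. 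A parallel care is needed at the top of tier $k-1$ when computing the limit $f(c_k^-)\to f(c_k)$ using continuity of $f$.
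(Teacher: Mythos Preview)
Your overall route matches the paper's: invoke rank preservation, build the profile inductively from tier~$0$, verify no profitable deviations, and pin down uniqueness by induction from the bottom. The inductive construction of $\tilde e_{k-1}$ and $s_k$ is exactly what the paper does.

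There is, however, a genuine gap in Step~3. You only rule out deviations to the \emph{adjacent} tiers $k\pm 1$. It remains to show that an applicant in $\psi_k$ cannot profitably jump to $\psi_{k+j}$ (or drop to $\psi_{k-j}$) for $j\ge 2$, and this does \emph{not} follow from the single-step indifference conditions you wrote down. The paper handles this explicitly by iterating the convexity argument behind Proposition~\ref{lem:rankpreservedindex}: if an applicant in $\psi_k$ strictly preferred to match the tier-$(k+2)$ score while, by construction, applicants in $\psi_{k+1}$ do not strictly prefer that same jump, then comparing the two pairs of effort levels via concavity of $g$ and strict convexity of $p$ produces a contradiction; one then iterates for larger $j$. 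The downward multi-tier case is dispatched by the symmetric observation that if someone in $\psi_k$ profits from dropping to $\psi_{k-1}$, then someone in $\psi_{k-1}$ would strictly profit from climbing to $\psi_k$, which is already excluded. Your plan should at least flag this step, since it is where the structural assumptions on $g$ and $p$ do real work beyond the one-step indifference.

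A smaller point on Step~4: your uniqueness argument assumes the tier-$(k-1)$ profile in the alternative equilibrium already coincides with the constructed one. This is correct, but it should be stated as part of the induction hypothesis (the paper anchors at tier~$0$, where $e_0(\theta_\pre)=e_0$ in every equilibrium, and then propagates upward).
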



As the theorem name suggests, the effort exerted by each applicant in equilibrium is akin to the price paid in a second price auction (cf. \citet{myerson1981optimal}): each applicant exerts just enough effort that applicants in the level below (those with ranks at or below the given applicant's level's lower cut-point $c_k$) cannot increase their welfare by instead exerting additional effort, $\tilde e_{k-1}{>e_0}$.  {It is instructive to calculate the post-effort scores resulting from these efforts; for $\theta_\pre\in \psi_k$, they are
	\begin{align*}
		v_k(\theta_\pre) &= \begin{cases}
			g(e_0)f(\theta_\pre) \quad \quad \text{for }
			k = 0\\
			\max\Big\{{g(\tilde e_{k-1}) f(c_{k})}\, ,\, g(e_0)f(\theta_\pre)\Big\} ~ \text{o.w.,}
		\end{cases}
	\end{align*}
Thus at the beginning of each band, applicants exert strictly more effort that those at the top of the previous band, decreasing their effort within the band since their increased skill requires less effort to get the same score.  If at some point effort $e_0$ (with cost $\cost(e_0)=0$) is enough to maintain the score needed to get reward $\ell_k$, their effort stays constant and their score grows, up to the beginning of the next band, when at least initially, the scores again stay constant and the efforts decrease.}
Figure~\ref{fig:eqm} illustrates the equilibrium efforts and post-effort scores under a sample 4-level reward function, and choices of $f, g$ and $\cost$ such that it is never enough to just exert effort $e_0$ except in the first band.\footnote{ These were: $f(x) = 2x$ (corresponding to a uniform distribution of skill levels), $g(x) = \sqrt{x}$ (decreasing marginal returns to effort) and $\cost = x^2$ (increasing marginal cost of effort).}

Thus, while {the} reward function $\lambda$ does not change the ranking of applicants (by Lemma~\ref{lem:rankpreservedindex}), it does affect the effort exerted and thus their post-effort scores. The function $\lambda$ is thus a design parameter for the school whose objective depends on applicant scores, as we explore in the next section.\footnote{See Corollary~\ref{corr:effectcomparativestatics} for an example of how equilibria change with {the} function $\lambda$}


Remark~\ref{rem:contests} details the technical differences between our results and analogues in the contests literature \citep{bodoh2018college,olszewski2016large}.

\section{\uppercase{Welfare analysis}}
\label{sec:welfare}

We begin our analysis by studying how the design of ranking reward function $\lambda$ changes the utilities of the applicants, the school, and the society. \lldelete{As shown in the previous section, the reward $\lambda$ influences the applicant equilibrium effort levels and thus the distribution of their scores. Moreover, by definition, $\lambda$ indicates the admissions probability given a rank, and so changes in $\lambda$ further affect the rank composition of those admitted. These multiple effects induce tradeoffs between the various welfare notions, as formalized in this section. The design of $\lambda$ should navigate these tradeoffs.}

Recall that the school is constrained to a $K$-level $\lambda$ admission policy such that the average probability of admission is $\rho > 0$, that is, $ \E[\lambda(\theta_\post)] = \rho$. A special case of interest is the following \textit{two-level} ($K=2$) function.


\begin{definition}[Two-level policy]\label{assump:2-level-fixed-cap} In our baseline two-level function parameterized by cut-off $c \in (0,1-\rho]$, each applicant with post-effort rank  $\theta_\post \geq c$ is admitted with probability $\ell_1 =\frac{\rho}{1-c} \in (0,1]$. Others are rejected, $\ell_0 = 0$.
\end{definition}

Note that standard non-randomized admissions policies are equivalent to case where $c = 1 - \rho$: the highest scoring applicants up to the capacity constraint are accepted with probability 1 and all others are rejected. We call this case \textbf{non-randomized admissions}. The other extreme is the one-level policy, \textbf{pure randomization}, where each applicant is admitted with probability ${\rho}$, i.e., $\ell_0 =\rho$. Decreasing the cut-off $c$ can be viewed as increasing the level of randomization in the admissions policy.

{We now reason about how various welfare and fairness metrics vary with $\lambda$ (in the two-level policy class, just $c$). To simplify the presentation, we assume that $f, g$ and $\cost$ are differentiable and that baseline effort $e_0$ is~$0$.}







\parbold{Applicant welfare} The following result shows how overall welfare is maximized. 

\begin{restatable}[Applicant welfare]{proposition}{propStudentWelfare}\label{prop:student-welfare}
	Among all $\lambda$ with $K$ levels, for $K\geq 1$, applicant welfare $\swelf$ is maximized by the one-level policy with pure randomization. Further, in the class of two-level policies, $\swelf$ is monotonically non-increasing in $c$.
\end{restatable}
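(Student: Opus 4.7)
The plan is to handle the two claims of the proposition separately, since they have very different flavors: the maximization claim is essentially immediate from the equilibrium characterization, while the monotonicity claim requires a genuine calculation.

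For the maximization claim, the starting point is to observe that under the one-level policy $\lambda\equiv\rho$, every applicant receives the same admission probability regardless of rank; hence, exerting effort is strictly costly with no upside, so each applicant's individually optimal effort is $e_0=0$ with $\cost(e_0)=0$. This yields $\swelf=\rho-\E[\cost(e)]=\rho$, which is the global upper bound since $\swelf\le\rho$ for any $\lambda$ (because $\cost\ge 0$). To show that no $K$-level policy with $K\ge 2$ ties this, I would invoke \Cref{lem:secondpricecontimuum}: the top-band quantity $\tilde e_{K-2}$ satisfies $\cost(\tilde e_{K-2})=\cost(e_{K-2}(c_{K-1}))+\ell_{K-1}-\ell_{K-2}>0$, so applicants in $[c_{K-1},1]$ exert strictly positive effort and $\swelf<\rho$.

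For the monotonicity claim, the plan is to use the equilibrium from \Cref{lem:secondpricecontimuum} to obtain a clean expression for $\swelf(c)$, then differentiate. In the two-level case with cutoff $c$, writing $u(c):=g(\tilde e_0(c))=v^*(c)/f(c)$ and $\Psi:=\cost\circ g^{-1}$ (which is increasing and convex since $\cost$ is increasing convex and $g^{-1}$ is increasing convex), the theorem gives $\cost(e_\theta)=\Psi(v^*/f(\theta))$ on $[c,1]$, so
\[
\swelf(c)=\rho-E(c),\qquad E(c)=\int_c^1 \Psi\!\left(\tfrac{v^*(c)}{f(\theta)}\right)d\theta,
\]
subject to the implicit constraint $\Psi(u(c))(1-c)=\rho$. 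Applying Leibniz's rule and the boundary identity $\Psi(v^*(c)/f(c))=\ell_1(c)$, the target $E'(c)\ge 0$ reduces to the inequality
\[
(v^*)'(c)\int_c^1 \frac{\Psi'(v^*(c)/f(\theta))}{f(\theta)}\,d\theta \ \ge\ \ell_1(c).\qquad(\star)
\]

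The main obstacle is precisely $(\star)$, which has a clear but delicate interpretation: as $c$ increases infinitesimally, the marginal applicant drops out (saving total effort cost $\ell_1$), while the remaining mass must collectively increase its effort to match the rising score bar $v^*$; we must show the latter always dominates. My plan for $(\star)$ is first to extract $(v^*)'(c)$ by implicit differentiation of $\Psi(u)(1-c)=\rho$, giving
\[
(v^*)'(c)=\frac{\ell_1(c)\,f(c)}{(1-c)\,\Psi'(u)}+u\,f'(c),
\]
and then to combine the convexity of $\Psi$ with a change of variables $\eta=f(c)/f(\theta)$ (so that $v^*/f(\theta)=u\eta$ and the integrand becomes $\Psi'(u\eta)$ against a measure depending on $f$). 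A pointwise bound such as $\Psi'(v^*/f(\theta))\le\Psi'(u)$ is too weak—it only recovers the first summand of $(v^*)'(c)$—so the argument must instead exploit the exact contribution $u f'(c)$ coming from the skill derivative. As a sanity check, in the power-function instance $\Psi(x)=x^k$, $f(\theta)=\theta^\alpha$, direct substitution turns $(\star)$ into the elementary inequality $1-(\gamma+1)c^\gamma+\gamma c^{\gamma+1}\ge 0$ for $\gamma=\alpha k-1$, which holds because the left side equals $0$ at $c=1$ and is monotonically decreasing on $(0,1)$; this serves as a useful template for the general estimate.
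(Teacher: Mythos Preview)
For the maximization claim your argument is correct and matches the paper's: both observe $\swelf=\rho-\E[\cost(e)]\le\rho$ with equality exactly under pure randomization, where every applicant exerts the cost-free effort $e_0$.

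For the monotonicity claim your setup is actually \emph{more} careful than the paper's. The paper differentiates $\swelf(c)=\rho-\int_c^1 \cost\bigl(e_1(\theta,c)\bigr)\,d\theta$ and asserts
\[
\frac{\partial\swelf}{\partial c}\;=\;-\cost(\tilde e_0)\;-\;\int_c^1\frac{\partial}{\partial c}\cost\bigl(e_1(\theta,c)\bigr)\,d\theta\;\le\;0,
\]
but the Leibniz boundary term carries the wrong sign: differentiating the lower limit contributes $+\cost(\tilde e_0)=+\ell_1(c)$, not $-\cost(\tilde e_0)$, so the two pieces compete rather than reinforce. You correctly spot this competition and reduce $\swelf'\le 0$ to the inequality~$(\star)$.

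The genuine gap is that you do not \emph{prove} $(\star)$. You verify it for power functions (essentially correctly, though when $\gamma=\alpha k-1\in(-1,0)$ your reduced inequality $1-(\gamma+1)c^\gamma+\gamma c^{\gamma+1}\ge 0$ must flip, since you have divided by a negative $\gamma$; the conclusion survives once this is tracked). For general $f$ and $\Psi$, however, you offer only a direction: the change of variables $\eta=f(c)/f(\theta)$, which still carries the full shape of $f$ through $(f^{-1})'$, together with your own acknowledgment that the obvious convexity bound $\Psi'(v^*/f(\theta))\le\Psi'(u)$ points the wrong way for a lower bound on the integral. That is not yet an argument. So the second half of your proposal locates the real difficulty---one that the paper's own proof glosses over via the sign error above---but does not resolve it.
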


This result is perhaps unsurprising given our formulation of applicant welfare -- in which effort is costly but applicants do not directly benefit from their score $v$, only their resulting admissions probabilities $\lambda(\theta_\post)$. Thus, $\swelf$ is maximized by a completely random admissions policy where the cost of exerting effort is $\cost(e_0) = 0$ for every applicant.

\parbold{School's private utility}\label{sec:private-util-max}
On the other extreme is the school's private utility $\Util\pri$: in which only the expected scores of admitted applicants matters. For $K=2$, it is maximized by a deterministic decision policy.

\begin{restatable}[School's private utility for two-level policies]{proposition}{propPrivateUtil}
\label{prop:private-util-max}
In a two-level policy, the school's private utility~$\Util\pri$ is monotonically non-decreasing in $c$, and consequently is maximized by non-randomized admissions.
\end{restatable}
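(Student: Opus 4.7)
The plan is to use the rank-preservation result (\Cref{lem:rankpreservedindex}) to rewrite $\Util\pri$ as a conditional expectation over pre-effort ranks, and then show monotonicity via two independent monotonicity facts about the equilibrium score $v_1(\theta_\pre;c)$ from \Cref{lem:effort}. Since $\lambda(\theta_\post)=\lambda(\theta_\pre)$ almost everywhere and, in the two-level policy, $\ell_0=0$ and $\ell_1=\rho/(1-c)$, I would write
$$\Util\pri(c) \;=\; \ell_1\int_c^1 v_1(\theta_\pre;c)\,d\theta_\pre \;=\; \rho\cdot \E\bigl[v_1(\theta_\pre;c)\,\bigm|\,\theta_\pre\geq c\bigr],$$
making the dependence on $c$ explicit in the score expression.

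By \Cref{lem:effort}, for $\theta_\pre\in[c,1]$ we have $v_1(\theta_\pre;c)=\max\{g(\tilde e_0(c))f(c),\,g(0)f(\theta_\pre)\}$, where $\tilde e_0(c)=\cost^{-1}\bigl(\rho/(1-c)\bigr)$ is strictly increasing in $c$ (using $e_0=0$, $\cost(0)=0$, and strict convexity of $\cost$). Two monotonicity facts then drive the argument. (i) For each fixed $\theta_\pre$, $v_1(\theta_\pre;c)$ is non-decreasing in $c$, because $g(\tilde e_0(c))f(c)$ is a product of two non-decreasing positive quantities, while the other term of the $\max$ does not depend on $c$. (ii) For each fixed $c$, $v_1(\theta_\pre;c)$ is non-decreasing in $\theta_\pre$, since the second term in the $\max$ is increasing in $\theta_\pre$ and the first is constant in $\theta_\pre$; hence conditioning on a higher lower threshold yields a weakly larger conditional mean.

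Combining (i) and (ii) pointwise, for any $c_1<c_2$ in $(0,1-\rho]$,
$$\Util\pri(c_2)=\rho\,\E[v_1(\theta_\pre;c_2)\mid\theta_\pre\geq c_2]\;\geq\;\rho\,\E[v_1(\theta_\pre;c_1)\mid\theta_\pre\geq c_2]\;\geq\;\rho\,\E[v_1(\theta_\pre;c_1)\mid\theta_\pre\geq c_1]=\Util\pri(c_1),$$
where the first inequality uses (i) in the integrand and the second uses (ii) to raise the conditioning set. Since $c=1-\rho$ is the largest admissible cutoff, the maximum over two-level policies is attained by non-randomized admissions, yielding the stated corollary.

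The only real obstacle is verifying fact (i) despite the $\max$: one must note that both arguments of the $\max$ are individually non-decreasing in $c$ (the $g(0)f(\theta_\pre)$ piece trivially so), so their maximum inherits this property. A minor edge case is $g(0)=0$, in which $v_1(\theta_\pre;c)=g(\tilde e_0(c))f(c)$ is constant in $\theta_\pre$ and the monotonicity of $\Util\pri(c)/\rho$ reduces directly to monotonicity of $g(\tilde e_0(c))f(c)$ in $c$, which is immediate. Everything else is standard conditional-expectation bookkeeping once the formula $\Util\pri(c)=\rho\,\E[v_1(\theta_\pre;c)\mid\theta_\pre\geq c]$ is on the table.
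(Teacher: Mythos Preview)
Your proof is correct and follows the same core route as the paper: write $\Util\pri$ via rank preservation and the explicit equilibrium score formula from \Cref{lem:effort}, then observe that $g(\tilde e_0(c))f(c)$ is increasing in $c$ because both factors are. The paper's proof, however, makes the extra simplifying assumption $g(e_0)=0$, under which $v_1(\theta_\pre;c)=g(\tilde e_0(c))f(c)$ is constant in $\theta_\pre$ and the monotonicity of $\Util\pri$ is immediate. Your argument is slightly more general: by keeping the full $\max$-form of $v_1$ and splitting the comparison into (i) pointwise monotonicity in $c$ and (ii) the standard fact that conditioning a non-decreasing integrand on a higher lower threshold weakly raises its mean, you cover the case $g(0)>0$ as well. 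You correctly note that when $g(0)=0$ your argument collapses to the paper's one-liner, so the two proofs agree where they overlap.
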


The result follows because both aspects of the school's private utility increases with the cutoff $c$ in a two-level policy: first, as $c$ increases, the school admits higher ranked applicants and the scores of the admitted applicants increase with the rank of the admitted applicants; second,  increasing $c$ increases $\ell_1$, which, as per \Cref{corr:effectcomparativestatics}, further increases the equilibrium effort levels (and thus the post-effort scores) of the highest ranked applicants. Note that this effect occurs even though rankings of applicants are identical under any $\lambda$ (\Cref{lem:rankpreservedindex}).


A natural question is whether the deterministic decision policy also maximizes $\Util\pri$ among all $K$-level policies for $K>2$. Surprisingly, the answer in general is negative.

\begin{restatable}[]{proposition}{counterexPrivateUtil}\label{prop-counterex-priv-util}
	 A $3$-level policy may achieve strictly higher $\Util\pri$ than non-randomized admissions.
\end{restatable}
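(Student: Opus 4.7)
My plan is to prove the existence claim by exhibiting an explicit instance where a particular $3$-level policy achieves strictly higher $\Util\pri$ than non-randomized admissions. The intuition comes from Theorem~\ref{lem:effort}: under non-randomized admissions ($c=1-\rho$, $\ell_1=1$), every admitted applicant has the same post-effort score $g(\tilde e_0)f(1-\rho)$, anchored at the boundary skill $f(1-\rho)$ (provided $g(e_0)f(1)\le g(\tilde e_0)f(1-\rho)$, which holds automatically when $g(e_0)=0$). When $f$ is strongly skewed toward the top, $f(1-\rho)$ is tiny, so $\Util\pri$ is small. A $3$-level policy can push the certainty boundary $c_2>1-\rho$, admitting only higher-skill applicants with probability one while using a middle tier to absorb the slack; the new anchor $f(c_2)\gg f(1-\rho)$ more than compensates for a smaller boundary effort $\tilde e_1<\tilde e_0$.

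\textbf{Concrete choice.} I would fix $\rho=0.8$, $f(\theta)=\theta^{10}$, $g(e)=\sqrt e$, $\cost(e)=e^2/2$, and $e_0=0$, and compare (a) non-randomized admissions with $c=0.2$, $\ell_0=0$, $\ell_1=1$, against (b) a $3$-level policy with $c_1=0.1$, $c_2=0.5$, $\ell_0=0$, $\ell_1=0.75$, $\ell_2=1$. The capacity constraint $\E[\lambda(\theta_\post)]=\rho$ is satisfied in both cases: $1\cdot 0.8=0.8$ and $0.75\cdot 0.4+1\cdot 0.5=0.8$.

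\textbf{Computing utilities.} By Proposition~\ref{lem:rankpreservedindex} and Theorem~\ref{lem:effort}, case (a) has $\tilde e_0=\cost^{-1}(1)=\sqrt 2$ and a uniform top-tier score $g(\tilde e_0)f(0.2)=2^{1/4}\cdot 0.2^{10}$, yielding $\Util\pri=0.8\cdot 2^{1/4}\cdot 0.2^{10}\approx 10^{-7}$. In case (b) the middle-tier effort is $\tilde e_0=\cost^{-1}(0.75)=\sqrt{1.5}$, while the carry-over $e_1(c_2)=g^{-1}\bigl(g(\tilde e_0)f(c_1)/f(c_2)\bigr)$ is negligible because $f(c_1)/f(c_2)=(0.1/0.5)^{10}\approx 10^{-7}$; hence $\cost(\tilde e_1)\approx \ell_2-\ell_1=0.25$, so $\tilde e_1\approx\sqrt{0.5}$ and the top-tier score is $(0.5)^{1/4}\cdot 0.5^{10}$. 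The middle-tier score $(1.5)^{1/4}\cdot 0.1^{10}$ is also negligible, so $\Util\pri \approx 0.5\cdot (0.5)^{1/4}\cdot 0.5^{10}\approx 4\cdot 10^{-4}$, which exceeds case (a) by several orders of magnitude.

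\textbf{Main obstacle.} The only delicate verification is that the ``$\max$'' in Theorem~\ref{lem:effort} reduces to its first argument throughout each band, so that the clean formula $v=g(\tilde e_{k-1})f(c_k)$ applies. Since $g(e_0)=g(0)=0$, the second argument $g(e_0)f(\theta)$ vanishes identically, and this holds automatically. With that in hand, the comparison reduces to pitting the skill ratio $f(c_2)/f(1-\rho)=2.5^{10}$ against the effort ratio $g(\tilde e_0)/g(\tilde e_1)=\sqrt 2$; the polynomial blow-up dominates by many orders of magnitude. Any sufficiently skewed $f$ (e.g., $f(\theta)=\theta^N$ for large $N$) yields the same conclusion.
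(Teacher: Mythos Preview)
Your proof is correct and follows the same idea as the paper: choose $f$ so skewed that the anchor $f(c_2)$ with $c_2>1-\rho$ dwarfs $f(1-\rho)$, so that the top-tier score under the $3$-level policy far exceeds the uniform score under non-randomized admissions. The paper argues this algebraically under the simplifying identity $g(\cost^{-1}(x))=x$, fixing an arbitrary three-level policy with $(\ell_0,\ell_1,\ell_2)=(0,x,1)$ and showing it beats non-randomization once $f(c_2)$ is taken large enough; you instead pin down concrete $f,g,\cost,\rho$ and compute both utilities numerically. Your route has the minor advantage that $g(e)=\sqrt e$ and $\cost(e)=e^2/2$ satisfy the model's strict concavity and strict convexity hypotheses cleanly, whereas the paper's identity $g\circ\cost^{-1}=\mathrm{id}$ forces $\cost^{-1}=g^{-1}$ and hence sits at the affine boundary of those assumptions.
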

This result provides a perhaps counter-intuitive insight for practice: \textit{even for a school maximizing its own utility}, deterministically accepting the top students is not generally optimal -- schools could improve on both student welfare and its own utility by randomizing.   
The counter-example used to prove the above proposition involves picking a skill distribution (the distribution of $f(\theta_\true)$) with a long tail. That is, if the skill level $f(\theta_\pre)$  of applicants with rankings above the optimal deterministic cutoff $1-\rho$ is relatively high, then the school can improve $\Util\pri$ by using a three-level policy that increases the competition for high admission probability and incentivizes higher scores among the top ranked applicants.

\parbold{Societal utility}
The previous results show that for both applicant welfare and school's private utility, an extreme two-level function is optimal among all two-level admissions functions $\lambda$: pure randomization for applicant welfare, and non-randomization for school's private utility. Next, we show that a similar result does not hold for societal utility, $\Util\soc = \E[v]$: if the goal is to maximize the score attained by the entire applicant population and not just the admitted applicants, the admissions function $\lambda$ should be randomized (but not purely randomized), even among two-level policies.\footnote{Our notion of societal utility
	differs from the traditional notion of \emph{social welfare}, which would also include in its formulation the effort costs of applicants. \Cref{prop:student-welfare} and \Cref{prop:SocUtil} would imply that social welfare is also maximized by an intermediate level of randomization.}

\begin{restatable}[Societal utility maximization for two-level policies]{proposition}{propSocUtil}
	Among two-level functions, there exists a setting in which societal utility $\Util\soc$ is maximized by choosing $c \in(0, 1-\rho)$, $\ell_1 = \frac{\rho}{1-c} \in (\rho,1)$.
\label{prop:SocUtil}
\end{restatable}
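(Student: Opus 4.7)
The plan is to exhibit a concrete choice of primitives $(f, g, p, \rho)$ for which the optimal two-level policy has a cutoff strictly in the interior of $[0, 1-\rho]$. First, I would combine Theorem~\ref{lem:secondpricecontimuum} with rank preservation (Proposition~\ref{lem:rankpreservedindex}) to obtain a closed-form expression for $\Util\soc$ as a function of $c$. In the two-level policy with $\ell_0 = 0$ and $\ell_1 = \rho/(1-c)$, band $0$ applicants ($\theta_\pre < c$) exert effort $e_0 = 0$ and produce score $g(0) f(\theta_\pre)$, while band $1$ applicants produce score $v_1(\theta_\pre) = \max\{g(\tilde e_0) f(c),\, g(0) f(\theta_\pre)\}$, where $\tilde e_0 = p^{-1}(\rho/(1-c))$ is determined by $p(\tilde e_0) - p(e_0) = \ell_1 - \ell_0$. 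Choosing $g$ with $g(0) = 0$ collapses the maximum in $v_1$ to the constant $g(\tilde e_0) f(c)$ and makes band $0$ contribute nothing to $\Util\soc$, yielding the one-variable problem
\begin{equation*}
\Util\soc(c) \,=\, (1-c)\, f(c)\, g\!\left(p^{-1}\!\left(\tfrac{\rho}{1-c}\right)\right).
\end{equation*}

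Next, I would instantiate $f(x) = x$, $g(x) = \sqrt{x}$, and $p(x) = x^2$, for which $g(p^{-1}(y)) = y^{1/4}$. Substituting gives
\begin{equation*}
\Util\soc(c) \,=\, \rho^{1/4}\, c\, (1-c)^{3/4}.
\end{equation*}
A direct differentiation yields $\frac{d\Util\soc}{dc}(c) = \rho^{1/4}(1-c)^{-1/4}(1 - 7c/4)$, whose unique zero on $(0,1)$ is $c^* = 4/7$, with the derivative positive on $(0, c^*)$ and negative on $(c^*, 1)$. Hence $c^*$ is the global maximizer of $\Util\soc$ on $[0,1]$.

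Finally, it remains to verify that $c^*$ lies in the feasible range $(0, 1-\rho)$ and that the induced $\ell_1^{*} = \rho/(1-c^*) = 7\rho/3$ lies strictly in $(\rho, 1)$. Both hold for any $\rho \in (0, 3/7)$; for instance, $\rho = 1/4$ gives $1-\rho = 3/4 > 4/7$ and $\ell_1^{*} = 7/12 \in (1/4, 1)$. Since $c^*$ is the global maximizer of $\Util\soc$ on $[0,1]$, it is a fortiori the maximizer over the smaller feasible set $[0, 1-\rho]$, and the induced reward level lies in $(\rho, 1)$ as required.

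I do not anticipate a true obstacle here, as the statement is an existence claim and the analysis reduces to one-variable calculus once $g(0)=0$ is imposed. The one conceptual observation is that $g(0)=0$ collapses the $\max$ in $v_1$, so that all of band~$1$ attains the same post-effort score; the resulting $\Util\soc(c)$ then exhibits a clean incentive-design tradeoff. Raising $c$ increases $\ell_1$ and hence the uniform band $1$ score $g(\tilde e_0) f(c)$, but simultaneously shrinks the mass $1-c$ of applicants receiving that score; for the chosen primitives these competing effects balance strictly in the interior of $[0,1-\rho]$.
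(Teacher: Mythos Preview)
Your proof is correct and follows essentially the same route as the paper: derive $\Util\soc(c) = (1-c)\, f(c)\, g(p^{-1}(\rho/(1-c)))$ under $g(e_0)=0$, instantiate with $f(x)=x$, $g(x)=\sqrt{x}$, $p(x)=x^2$, and show the maximizer is interior to $(0,1-\rho)$ for small enough $\rho$. The only difference is presentational: you compute the global maximizer $c^*=4/7$ directly and obtain the sharp threshold $\rho<3/7$, whereas the paper argues via the sign of the derivative at the boundary $c=1-\rho$ (stating $\rho<5/9$, which is looser than---or a slip for---the $3/7$ your computation gives).
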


Here, increasing $c$ and $\ell_1$ has two competing effects: (1) 
  it increases the effort level of the highest ranked applicants in $\psi_1$; (2)  it simultaneously increases the fraction of applicants who exert minimum effort, as they go from being admitted with probability $\ell_1$ (and thus needing to exert effort to preserve their position) to being deterministically rejected. Maximizing overall score across the population thus requires an intermediate level of randomization. More generally, a $K$-level reward function for $K> 2$ may improve societal utility over a two-level policy.

The choice of $\lambda$ not only differentially affects applicants' admissions probabilities; it also differentially affects applicant \textit{efforts} and \textit{scores} in equilibrium. Choosing a reward function $\lambda$ to increase the equilibrium scores of some applicants comes at a cost of the scores of other applicants. Thus, as the contrast between $\Util\soc$ and $\Util\pri$ establishes, the optimal design depends on which applicants one considers. When a school has the power to choose design $\lambda$ to maximize its own private interests, doing so comes at a cost of scores of other applicants. As shown in Figure~\ref{fig:welfare-tradeoffs}, the exact tradeoffs between $\Util\soc$, $\Util\pri$, and applicant welfare $\swelf$ depend on various model parameters, such as $f$, $g$ and $\cost$, as well as the capacity~$\rho$.

\begin{figure}[tbp]
	\centering
	\includegraphics[width=0.45\textwidth]{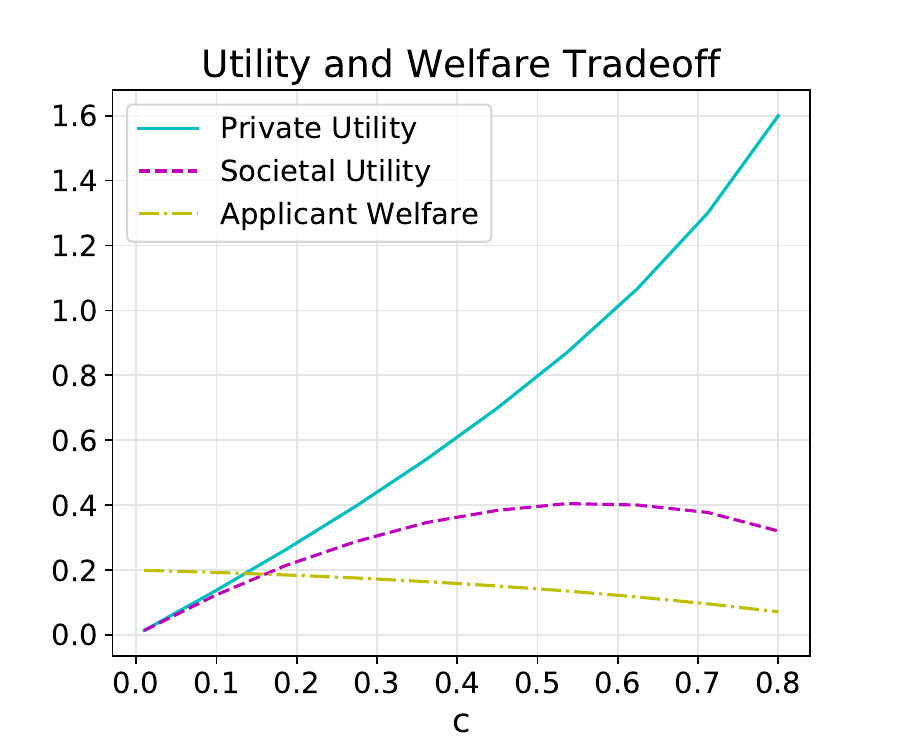}
	\caption{$\Util\soc$, $\Util\pri$, and $\swelf$ for two-level policies parametrized by $c \in (0,1-\rho)$.  Model parameters are as in Figure~\ref{fig:eqm}. The school's capacity is $\rho = 0.2$.
	}\label{fig:welfare-tradeoffs}
\end{figure}

\section{\uppercase{Environment differences}}\label{sec:environment}

We have thus far assumed a basic model where applicants differ only in their latent skill levels $f(\theta_\pre)$, and their observed score depends only on their latent skill level and chosen effort. However, in societies with structural inequalities, an individual's measured success also depends on various environmental factors beyond one's control, such as family income and the availability of resources in one's community. \citet{roemer1998equality}'s work on the equality of opportunity argues that environmental factors (called ``circumstances'') are distinguished from effort. For example, for the same amount of time spent studying, an applicant from a well-resourced school may achieve higher test scores than one from an under-resourced school. In this section, we extend the model introduced in Section~\ref{sec:modelequil} to study the disparate impact of admission policies in the presence of structural inequalities, specifically differences in the applicant's previous education environment. 

\paragraph{Model and equilibria characterization} We now denote each applicant's latent skill rank as $\theta_\true \in [0,1]$. In addition to the latent skill, each applicant has an (unobserved) environmental factor $\psi \in \Psi$ that represents how favorable their environment is for attaining a higher score. Because a favorable environment results in a higher rate of return for effort, we model the environment as a multiplicative factor in the score (see~e.g., \citet{calsamiglia09decent}). Formally, the post-effort score is a function of the latent skill, the environmental factor, and the effort level:
\begin{equation*}\label{eq:env-score}
v = \psi \cdot  g(e)\cdot f(\theta_\true),
\end{equation*}
where $g, f$ are as defined in Section~\ref{sec:model}. 


 We assume there are two groups of applicants, $\A$ and $\B$, and the distribution of skill is the same in both groups. Group $\A$ has a more favorable environment factor, that is,  $\Psi = \{\psi_\A, \psi_\B\}$ and $\psi_\A > \psi_\B$. Thus we will also refer to $\B$ as the ``disadvantaged group". To simplify our presentation, we assume each group is half of the total applicant population, though the results in this section generalize.
We defer all proofs in this section to Appendix~\ref{app:environment}.

We begin by characterizing the equilibrium ranking under the designer's policy $\lambda$. The equilibrium effort levels and post-effort ranks are as defined in Definition~\ref{def:equi}, except they are now group-dependent, that is, we have $e(\theta_\true, \psi)$ and $\theta_\post(\theta_\true, \psi)$. Because of the differences in $\psi$, the post-effort ranking $\theta_\post$ is now group-dependent, and in general is not equal to $\theta_\true$. To apply Proposition~\ref{lem:rankpreservedindex} as before, we construct an ``environment-scaled pre-effort rank'' (denoted $\theta_\pre$).

\begin{restatable}[Equilibrium under group differences]{proposition}{propEquiEnv}\label{prop:equi-env}
Define $\theta_\pre$ as:
	\(\theta_\pre := f^{-1}_\mix(f(\theta_\true)\cdot \psi)\),
		where $f^{-1}_\mix$	is the CDF for the environment-scaled skill, $f(\theta_\true)\cdot \psi$:
	\begin{equation*}
	f^{-1}_\mix(x) := \frac{1}{2}f^{-1}(x/\psi_\A)+ \frac{1}{2}f^{-1}(x/\psi_\B).
	\end{equation*}

	Then, in every equilibrium, $\lambda(\theta_{\post}(\theta_\true, \psi))=\lambda(\theta_\pre)$.
\end{restatable}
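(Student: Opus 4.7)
The plan is to reduce the claim to Proposition~\ref{lem:rankpreservedindex} by a change of variables that absorbs the environmental factor $\gamma$ into an \emph{effective skill} and thereby treats all applicants as a single population. The key observation is that the score factors as $v = g(e)\cdot[\gamma\cdot f(\theta_\true)]$, so the strategic problem depends on $\gamma$ and $\theta_\true$ only through the product $\gamma\cdot f(\theta_\true)$; once we re-parametrize by a uniform rank for this product, we recover exactly the baseline model of Section~\ref{sec:model}.

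First I would verify that $f^{-1}_\mix$ in the statement is the CDF of $\gamma\cdot f(\theta_\true)$ in the mixed population. Since each group has mass $\tfrac12$, $\theta_\true$ is uniform on $[0,1]$, and $f$ is continuous and strictly increasing, we have $\Pr(f(\theta_\true)\leq y) = f^{-1}(y)$, hence
\[
\Pr\bigl(\gamma\cdot f(\theta_\true)\leq x\bigr) = \tfrac12\, f^{-1}(x/\gamma_\A) + \tfrac12\, f^{-1}(x/\gamma_\B) = f^{-1}_\mix(x).
\]
Continuity and strict monotonicity of $f^{-1}$ on the skill support transfer to $f^{-1}_\mix$, so by the probability integral transform $\theta_\pre := f^{-1}_\mix(\gamma\cdot f(\theta_\true))$ is uniformly distributed on $[0,1]$, and its inverse $\tilde f := (f^{-1}_\mix)^{-1}$ is a well-defined, continuous, strictly increasing function satisfying $\tilde f(\theta_\pre) = \gamma\cdot f(\theta_\true)$. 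The score then rewrites as $v = g(e)\cdot\tilde f(\theta_\pre)$, i.e.\ the baseline-model score with the single effective skill function $\tilde f$ in place of $f$.

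Having made this reduction, I would invoke Proposition~\ref{lem:rankpreservedindex} in the reduced model---the hypotheses (continuity and strict monotonicity of $\tilde f$, uniform distribution of $\theta_\pre$, shared convex strictly increasing cost $\cost$, atomless score distribution) are all in place---to conclude $\lambda(\theta_\post) = \lambda(\theta_\pre)$ up to sets of measure zero, which is exactly the claim. The main subtlety I anticipate is in arguing that the reduction is equilibrium-preserving: an applicant's best-response problem depends only on their cost function and the distribution of competitors' scores, both of which are captured after the change of variables by the uniform $\theta_\pre$ and the single skill function $\tilde f$, so equilibria of the original two-group game correspond bijectively (up to measure zero) to equilibria of the reduced single-group game. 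Cross-group coincidences $\gamma_\A\cdot f(\theta_\true^\A) = \gamma_\B\cdot f(\theta_\true^\B)$ have measure zero because $\gamma_\A\neq\gamma_\B$ and $f(\theta_\true)$ has no atoms, so the ``appropriate tie-breaking'' caveat in the statement handles them exactly as the tie-breaking function $\Gamma$ does in the baseline proof.
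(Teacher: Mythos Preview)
Your proposal is correct and follows essentially the same approach as the paper: absorb the environmental factor into an effective skill $\gamma\cdot f(\theta_\true)$, verify that $f^{-1}_\mix$ is its CDF so that $\theta_\pre$ is uniform, rewrite $v = g(e)\cdot f_\mix(\theta_\pre)$, and then invoke Proposition~\ref{lem:rankpreservedindex}. You are in fact more explicit than the paper about checking the hypotheses and about why the reduction preserves equilibria, but the underlying argument is identical.
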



The environment factor depresses the $\theta_\pre$, and therefore $\theta_\post$, of the disadvantaged group. Rank preservation (\Cref{lem:rankpreservedindex}) again simplifies equilibrium analysis. 

\paragraph{Welfare gap and access differences}
Using the above characterization, we now study the disparate impact of various rank reward functions $\lambda$,
through two key facets: (1) the \emph{welfare gap} and (2) \emph{access}. The former measures welfare disparity between applicants from different groups \emph{with the same latent skill rank}.

\begin{definition}[Welfare gap]
	Let $\swelf^\G(\theta_\true)$ denote post-effort welfare of an applicant with latent skill ranking $\theta_\true$ from group $\G \in \{\A, \B\}$, i.e.,
	\begin{equation}
	\swelf^\G(\theta_\true):= \lambda(\theta_\post(\theta_\true, \psi_\G)) - \cost(e(\theta_\true, \psi_\G)).
	\end{equation}
	We define the \emph{
		welfare gap} as 
	\begin{equation*}
		\welfgap(\theta_\true):= \swelf^\A(\theta_\true) - \swelf^\B(\theta_\true).
	\end{equation*}
\end{definition}

The welfare gap captures differences in admission probabilities \textit{and} in the effort needed to achieve such probabilities. Our next notion, \textit{access}, captures whether a decision policy \emph{includes} the disadvantaged group in the admitted class, regardless of effort. 

\begin{definition}[Access]
 \emph{Access} is the overall probability of admission of the disadvantaged group.
	\begin{equation*}
		\access :=  \E_{\theta_\true}[\lambda(\theta_\post(\theta_\true, \psi_\B))].
	\end{equation*}
\end{definition}


\begin{restatable}[Admission and pointwise welfare gap for two-level policies]{proposition}{propEnvDiff}\label{prop:environment-diff}
	Denote the group-specific rank threshold for group $\G$ as
	\begin{equation*}
	\thresG{c} := f^{-1}\left(\frac{f_\mix(c)}{\psi_\G}\right).
	\end{equation*}
A two-level policy with $c \in (0,1-\rho]$ admits a group~$\A$ applicant with $\theta_\true \geq \thresA{c} $ with probability $\frac{\rho}{1-c}$, a group~$\B$ applicant with $\theta_\true \geq \thresB{c} $ with probability $\frac{\rho}{1-c}$, and all other applicants with probability~0. 
The welfare gap $\welfgap(\theta_\true)$ is non-negative for every $\theta_\true$, and strictly positive for $\theta_\true \geq \thresB{c}$. In contrast, the one-level pure randomization policy has $\welfgap(\theta_\true) \equiv 0$.

\end{restatable}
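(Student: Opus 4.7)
The plan is to chain together Proposition~\ref{prop:equi-env} (which reduces the grouped game to a single effective skill $f_\mix$) with the equilibrium characterization from Theorem~\ref{lem:effort}, and then directly compare the resulting welfares across the two groups.

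\textbf{Step 1 (admission thresholds).} I first translate the two-level reward assignment from $\theta_\pre$ to $\theta_\true$. By Proposition~\ref{prop:equi-env}, a group-$\G$ applicant with latent rank $\theta_\true$ receives reward $\lambda(\theta_\pre)$ where $\theta_\pre=f^{-1}_\mix(\gamma_\G f(\theta_\true))$. The two-level policy assigns $\ell_1=\rho/(1-c)$ exactly when $\theta_\pre\ge c$, i.e., when $\gamma_\G f(\theta_\true)\ge f_\mix(c)$, which rearranges to $\theta_\true\ge f^{-1}(f_\mix(c)/\gamma_\G)=\mathsf{thres}_\G(c)$. Since $\gamma_\A>\gamma_\B$, this immediately yields $\thresA{c}<\thresB{c}$, establishing the first claim.

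\textbf{Step 2 (equilibrium efforts by group).} Now I compute the effort each admitted applicant actually exerts. Reformulating the grouped equilibrium as a single-group ranking game with effective skill function $f_\mix$ (Proposition~\ref{prop:equi-env}), Theorem~\ref{lem:effort} says all applicants with $\theta_\pre\in\psi_1$ achieve a common post-effort score $v_c = g(\tilde e_0)f_\mix(c)$ at the band boundary, with $\tilde e_0$ defined by $\cost(\tilde e_0)=\ell_1$ (using $\cost(e_0)=0$). In the actual world, a group-$\G$ applicant at $\theta_\true$ attains score $\gamma_\G g(e)f(\theta_\true)$, so the effort needed to sit at (or above) the boundary is $e_\G(\theta_\true)=g^{-1}\!\bigl(v_c/(\gamma_\G f(\theta_\true))\bigr)$, provided this exceeds $e_0=0$. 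Because $g^{-1}$ is strictly increasing and $\gamma_\A>\gamma_\B$, we get $e_\A(\theta_\true)<e_\B(\theta_\true)$ whenever both are in the top band, and hence $\cost(e_\A(\theta_\true))<\cost(e_\B(\theta_\true))$ by strict convexity with minimum at $0$.

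\textbf{Step 3 (the welfare gap, by cases).} I split on $\theta_\true$. (i) For $\theta_\true<\thresA{c}$, both groups are in the bottom band, both exert $e_0=0$ at cost $0$, both have reward $0$, so $\welfgap(\theta_\true)=0$. (ii) For $\thresA{c}\le\theta_\true<\thresB{c}$, group $\A$ is admitted with probability $\ell_1$ exerting effort $e_\A(\theta_\true)$, while group $\B$ is rejected with zero effort; individual rationality in equilibrium forces $\ell_1-\cost(e_\A(\theta_\true))\ge 0$ (otherwise deviating to $e_0=0$ strictly improves welfare), giving $\welfgap\ge 0$. (iii) For $\theta_\true\ge \thresB{c}$, both applicants receive the same reward $\ell_1$, so $\welfgap(\theta_\true)=\cost(e_\B(\theta_\true))-\cost(e_\A(\theta_\true))>0$ by the strict inequality from Step 2; at the left endpoint $\theta_\true=\thresB{c}$, group $\B$ has welfare $\ell_1-\cost(\tilde e_0)=0$ by definition of $\tilde e_0$, while group $\A$ has strictly positive welfare, so strict positivity still holds. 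Finally, under the one-level pure randomization policy, $\lambda\equiv\rho$ is rank-independent, so each applicant maximizes welfare by exerting $e_0=0$ regardless of group, yielding $\welfgap\equiv 0$.

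The main obstacle is Step 2, keeping the two parameterizations (by $\theta_\true$ versus by $\theta_\pre$) straight and verifying that the equilibrium effort derived from the $f_\mix$-reduction coincides with the group-specific effort needed to match the common boundary score $v_c$; once this is in hand the case analysis of Step 3 is routine, with the boundary cases handled via the defining equation $\cost(\tilde e_0)=\ell_1$.
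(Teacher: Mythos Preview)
Your proposal is correct and follows essentially the same approach as the paper: invoke Proposition~\ref{prop:equi-env} to obtain the group-specific thresholds, then do a three-region case analysis on $\theta_\true$, using the effort formula from Theorem~\ref{lem:effort} (with $f_\mix$ in place of $f$) together with $\gamma_\A>\gamma_\B$ to conclude that group~$\A$ exerts strictly less effort in the High region. Your write-up is in fact slightly more explicit than the paper's (e.g., the individual-rationality argument in the Middle region and the boundary check at $\theta_\true=\thresB{c}$ via $\cost(\tilde e_0)=\ell_1$), but the ideas are the same.
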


In the ``High" region of $\theta_\true$ (where $\theta_\true \ge \thresB{c}$), both the group $\A$ and group $\B$ applicant have the same probability of admission. However, group $\B$ applicants must expend more effort, resulting in a strictly positive welfare gap. In the ``Middle'' region (where $\theta_\true \in [ \thresA{c}, \thresB{c})$), group $\A$ applicants are admitted with positive probability while group $\B$ applicants are deterministically rejected, again leading to a positive welfare gap. In the ``Low'' region (where $\theta_\true < \thresA{c}$), applicants from both groups are deterministically rejected, leading to no gap. Further comparison of the equilibrium welfare, admission probability and effort for any given two-level policy can be found in Table~\ref{tab:comparison} in the appendix.

The previous proposition highlights the joint role of admissions probability and effort in determining the welfare gap. Our next result focuses on the welfare gap in the ``High" region, where applicants of both groups are admitted with the same positive probability, showing that the welfare gap decreases as a two-level admission policy becomes more randomized.

\begin{restatable}[Welfare gap increases with $c$]{proposition}{propWelfareGap}\label{prop:deriv-welfare-gap}
	Consider the setting in Definition~\ref{assump:2-level-fixed-cap}, with the school's chosen admissions policy $c=\bar{c}$, where $\bar{c} \le  (f_\mix)^{-1}(f(1)\cdot \psi_\B)$. Then, for any $\theta_\true \geq \thresB{\bar{c}}$, that is, $\theta_\true$ is in the ``High'' region, we have
	\(\left.\frac{\partial \welfgap(\theta_\true)}{\partial c}\right\vert_{c=\bar{c}}> 0\).
\end{restatable}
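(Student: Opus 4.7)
I would translate via Proposition~\ref{prop:equi-env}, reduce the welfare gap to a cost difference, and differentiate.

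\textbf{Equilibrium efforts.} Since the post-effort score equals $g(e) f_\mix(\theta_\pre)$, Proposition~\ref{prop:equi-env} reduces the equilibrium to the single-group setting of Theorem~\ref{lem:effort} with $f$ replaced by $f_\mix$. Under the two-level policy with cutoff $c$, $\ell_0 = 0$, $\ell_1 = \rho/(1-c)$, Theorem~\ref{lem:effort} gives, for $\theta_\pre \in [c,1]$,
\[
    e(\theta_\pre) = g^{-1}\!\left(\frac{g(\tilde e_0(c))\, f_\mix(c)}{f_\mix(\theta_\pre)}\right), \qquad \cost(\tilde e_0(c)) = \frac{\rho}{1-c}.
\]
Using $f_\mix(\theta_\pre) = \gamma_\G f(\theta_\true)$ for a group $\G$ applicant whose $\theta_\true$ lies in the High region,
\[
    e_\G(\theta_\true, c) = g^{-1}\!\left(\frac{g(\tilde e_0(c))\, f_\mix(c)}{\gamma_\G f(\theta_\true)}\right).
\]

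\textbf{Derivative of the gap.} For $\theta_\true \geq \thresB{\bar c}$ both groups are admitted with the same probability $\ell_1$ (Proposition~\ref{prop:environment-diff}), so $\welfgap(\theta_\true) = \cost(e_\B) - \cost(e_\A)$. Define $h(c) := g(\tilde e_0(c))\, f_\mix(c)$, so that $g(e_\G) = h(c)/(\gamma_\G f(\theta_\true))$. Implicit differentiation yields $\partial_c e_\G = h'(c)/(\gamma_\G f(\theta_\true)\, g'(e_\G))$, and hence
\[
    \frac{\partial \welfgap}{\partial c} = \frac{h'(c)}{f(\theta_\true)} \left[\frac{\cost'(e_\B)}{\gamma_\B\, g'(e_\B)} - \frac{\cost'(e_\A)}{\gamma_\A\, g'(e_\A)}\right].
\]

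\textbf{Sign.} Both $c \mapsto \rho/(1-c)$ and $\cost^{-1}$ are strictly increasing, hence so is $\tilde e_0$; together with $g$ and $f_\mix$ strictly increasing this gives $h'(\bar c) > 0$. For the bracket, $\gamma_\A > \gamma_\B$ forces $e_\B > e_\A$; the standing assumption $g(e_0)=0$ makes the argument of $g^{-1}$ strictly positive, so $e_\A > e_0$. Strict convexity of $\cost$ then gives $\cost'(e_\B) > \cost'(e_\A) > 0$, concavity of $g$ gives $g'(e_\B) \leq g'(e_\A)$, and $1/\gamma_\B > 1/\gamma_\A > 0$, so every factor of the first term of the bracket strictly dominates its counterpart, making the bracket and hence the derivative strictly positive.

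\textbf{Main obstacle.} The delicate step is transporting Theorem~\ref{lem:effort} to the two-group model through Proposition~\ref{prop:equi-env} and ensuring strict interiority $e_\A, e_\B > e_0$ so that the ``$\max$'' in the effort formula is inactive and the convexity/concavity inequalities are strict. The hypothesis $\bar c \leq f_\mix^{-1}(f(1)\gamma_\B)$ guarantees the High region is nonempty, and $g(e_0)=0$ supplies the needed strict interiority; once these are in place the sign analysis is essentially a monotonicity calculation.
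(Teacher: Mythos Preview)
Your proof is correct and follows essentially the same approach as the paper: both compute the derivative of the welfare gap via the chain rule and verify its sign using the convexity of $\cost$, concavity of $g$, and the ordering $\gamma_\A>\gamma_\B$. Your version is slightly tidier in that you cancel the common reward term $\rho/(1-c)$ up front (so only the cost difference remains) and package the $c$-dependence into $h(c)=g(\tilde e_0(c))f_\mix(c)$, but the substance of the argument is identical.
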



Decreasing $c$ (increasing randomization) reduces the welfare gap in the ``High'' region of $\theta_\true$. As the school's private utility is actually increasing in $c$ (Proposition~\ref{prop:private-util-max}), there is also a tradeoff between the school's private utility and the welfare gap.

Increasing randomization also increases access: it is maximized by pure randomization and there exists a large class of models where access is always improved by more randomization in a two-level policy.

\begin{restatable}[Access decreases with $c$]{proposition}{propAccess}
	Pure randomization has higher $\access$ than any two-level policy.
	Moreover, if $f^{-1}$ is convex, $\access$ for  two-level policies is non-increasing in $c$. 
	\label{prop:access}
\end{restatable}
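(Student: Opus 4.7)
The plan is to write $\access(c)$ in closed form using Proposition~\ref{prop:environment-diff} and then analyze how it varies with $c$. For a two-level policy with cutoff $c$, only group-$\B$ applicants with $\theta_\true \geq \thresB{c}$ receive a positive admission probability (namely $\rho/(1-c)$), so
\[\access(c) \;=\; \bigl(1 - \thresB{c}\bigr)^+ \cdot \frac{\rho}{1-c},\]
while pure randomization yields $\access = \rho$. The two claims therefore reduce to (i) $(1 - \thresB{c})/(1-c) < 1$ and (ii) $c\mapsto (1-\thresB{c})^+/(1-c)$ is non-increasing whenever $f^{-1}$ is convex.

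For (i), I would use the definition of $f_\mix^{-1}$: setting $s = f_\mix(c)$,
\[c \;=\; f_\mix^{-1}(s) \;=\; \tfrac{1}{2}f^{-1}(s/\gamma_\A) + \tfrac{1}{2}f^{-1}(s/\gamma_\B) \;=\; \tfrac{1}{2}\bigl(\thresA{c} + \thresB{c}\bigr).\]
Since $\gamma_\A > \gamma_\B$ and $f^{-1}$ is strictly increasing, $\thresA{c} < \thresB{c}$, so $\thresB{c} > c$, and hence $\access(c) < \rho$.

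For (ii), in the interior regime where $\thresB{c} < 1$, the inequality $\access'(c) \leq 0$ becomes, after differentiating the quotient and writing $T(c)=\thresB{c}$, the condition $T'(c)(1-c) \geq 1 - T(c)$. I would reparameterize by $v := f_\mix(c)/\gamma_\B$, so that $T = \phi(v)$ and $c = \tfrac{1}{2}\bigl(\phi(rv) + \phi(v)\bigr)$, where $\phi := f^{-1}$ and $r := \gamma_\B/\gamma_\A \in (0,1)$. Differentiating both sides with respect to $v$, substituting into $T'(c) = (dT/dv)/(dc/dv)$, and clearing denominators collapses the target to the clean inequality
\[\phi'(v)\bigl(1 - \phi(rv)\bigr) \;\geq\; r\,\phi'(rv)\bigl(1 - \phi(v)\bigr).\]

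This residual inequality then follows by chaining three elementary facts: convexity of $\phi$ gives $\phi'(v) \geq \phi'(rv)$ (since $v \geq rv$); monotonicity of $\phi$ gives $1 - \phi(rv) \geq 1 - \phi(v) \geq 0$ on the regime $\phi(v)\le 1$; and $r < 1$. Together these imply that each factor on the left dominates the corresponding non-negative factor on the right. A brief boundary argument handles $c$ past the point where $\thresB{c}$ reaches $1$: beyond this $\access \equiv 0$, and since the interior expression vanishes as $\thresB{c}\uparrow 1$, monotonicity is preserved globally. The main obstacle is spotting the $v$-parameterization that exposes the convexity structure; once that substitution is made, the convexity of $f^{-1}$ does essentially all the work.
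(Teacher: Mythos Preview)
Your proposal is correct and follows essentially the same approach as the paper: write $\access(c)=\frac{\rho}{1-c}(1-\thresB{c})$, use $c=\tfrac12(\thresA{c}+\thresB{c})$ to get $\thresB{c}>c$ for part (i), and differentiate for part (ii), reducing monotonicity to a pointwise inequality on $(f^{-1})'$ that follows from convexity and $\gamma_\A>\gamma_\B$. The only cosmetic difference is that the paper factors the derivative condition into the two separate bounds $\partial_c\thresB{c}\ge 1$ and $(1-\thresB{c})/(1-c)\le 1$ rather than proving your combined inequality $\phi'(v)(1-\phi(rv))\ge r\,\phi'(rv)(1-\phi(v))$ directly, but the underlying ingredients are identical.
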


By reducing the level of competition among applicants, the more randomized admission policy increases access, becoming more inclusive to applicants who are disadvantaged by their environment.
A ranking reward function that increases the incentive for applicants to compete tends to increase the welfare gap and reduce access. Without knowledge of the applicant's environment factor, randomizing the admissions policy thus may be key to reducing disparate outcomes.

%
%
%
%
%
%

\section{\uppercase{Discussion}}
\label{sec:conclusion}

We now draw some comparisons to welfare results known in strategic classification. Taken together, Proposition~\ref{prop:student-welfare} and~\ref{prop:private-util-max} imply, among two-level policies, that there is a direct tradeoff between the applicant welfare and the school's private utility: by reducing the degree of randomization in the admission policy, we increase private utility at the expense of applicant welfare. In the strategic classification setting, \citet{Milli2019social} observed a similar tradeoff between the ``institution utility'' (the classification loss under gaming) and the ``social burden'' (cost incurred by individuals for changing their features). Both sets of results suggest that strategic behavior complicates the choice of an optimal decision policy and necessitates careful adjudication among stakeholders.

Our key takeaway that randomization in the decision policy $\lambda$ can improve applicant welfare 
is also related to \citet{braverman2020role}. They studied the welfare benefits of randomization---in the form of probabilistic classifiers and noisy features---and observed that the designer has no incentive to use a more randomized classifier. This finding is true in our setting, only \textit{for the class of two-level policies} -- randomization with more levels may improve both welfare and private school utility.  

The strategic classification literature \citep{hu2019disparate,Milli2019social,braverman2020role} also considers the disparate costs of strategic behavior. These works study two groups with different costs of \emph{gaming}, and \citet{Milli2019social} introduces the concept of ``social gap'': the difference between the costs of successful gaming incurred. Our \emph{welfare gap} can be seen as a measure of the difference in the costs of effort in the ``High" region of latent skill rank. We showed a tradeoff between the school's private utility and the welfare gap (Proposition~\ref{prop:private-util-max} and \ref{prop:deriv-welfare-gap}), while \citet{Milli2019social} (Theorem 3.1) showed a tradeoff between the institutional utility and the social gap. On the other hand, the concept of access---the proportion of the disadvantaged group admitted---as an indicator of disparate impact has not received much attention in the strategic classification literature, as one cannot compare such access across mechanisms that accept different numbers of people overall.

 Overall, we believe that our strategic ranking model is a natural one through which to study constrained allocation settings for strategic decision making, such as admissions and hiring. While our base model is general, equity notions in other applications may differ from those considered in \Cref{sec:environment}. 
 From a technical perspective, there remain open computational questions for finding optimal $K$ level reward functions. 
More generally, our work supports the consideration of the ranking and constrained allocation model over unconstrained classification in the FATE ML community,  centering the study of welfare notions of fairness over that of statistical parity.




\subsubsection*{Acknowledgements}
The authors thank Frances Ding and John Miller for helpful comments on a draft. They also thank the anonymous reviewers for their invaluable feedback.


\bibliographystyle{abbrvnat}

\bibliography{mybib}

\newpage
 \appendix
 
 \thispagestyle{empty}

 \onecolumn \makesupplementtitle

\section{Supplemental model discussion}

Here we include supplemental technical remarks on the model.

\begin{remark}[Construction of the $\gamma$ map]\label{rem:gamma}
	Formally,  if $v_0$ is an atom of the score distribution and $\Omega_{v_0}$ is the set of tied applicants with score $v_0$,
	then the discontinuity of $F$ {at  $v_0$} has height equal to the measure of $\Omega_{v_0}$, and so $\gamma$ for those applicants can be filled by using the CDF of $\Gamma(\omega)$ restricted to
	$\Omega_{v_0}$; this gives a distribution for $\gamma(\omega, v_0)$ that is uniform over
	the gap interval $[ F_-(v_0),  F(v_0)]$ when restricted to
	$\Omega_{v_0}$, and hence leads to the claimed uniform distribution of $\theta_\post(\omega)$.
	Finally, we define $\gamma(\omega, v)$ for applicants  $\omega\notin\Omega_{v_0}$ by ``slotting them in'' in such a way that for a pair $\omega,\omega'$ with exactly one member in $\Omega_{v_0}$ and $\Gamma(\omega)<\Gamma(\omega')$ we have that
	$\gamma(v_0,\omega)\leq\gamma(v_0,\omega')$.
	Since the image of $\Omega_{v_0}$ is by construction dense in the gap interval, this uniquely determines $\gamma(v_0,\omega)$ for all applicants $\omega$.

	Note that for a given applicant $\omega$, the map $v\mapsto \gamma(\omega,v)$ is not necessarily 1-1; indeed, if $F$ is constant on an interval $I$, all $v\in I$ lead to the same rank.  But this only effects regions where the distribution of $v$ has no mass, and thus will not cause any issues.
\end{remark}

\begin{remark}[Ties in pre-effort skill]
	With probability one, there are no two applicants with the same skill level, and the support of the skill distribution has no gap: given two applicants with different skills, the probability of finding an applicant with skill in between these two is always non-zero.
\end{remark}

\begin{remark}[$\gamma$'s dependency on the set of efforts]\label{rem:gamma-effort}
	Formally, $\gamma$ depends on the set of efforts. Given a fixed set and $\gamma$, for each applicant $\omega$ the first condition considers the counter-factual ranking of $\omega$ with different post-effort values but using the same ranking function. As defined, $\gamma$ yields a uniform distribution of ranks with such measure $0$ changes. In Appendix \Cref{lem:deviationsmeasure0}, we further prove that two effort sets equal up to sets of measure $0$ induce the same ranking function $\gamma$, and so the condition is consistent.
\end{remark}

\begin{remark}[Credible commitment on the part of the school]\label{rem:design-opt}
	 Note that we do not require that, given applicant effort levels, the design $\lambda$ is optimal for the school's utility. As in e.g.~\citet{braverman2020role} for classification, the randomized ranking reward (characterized by probability $\lambda$) is not optimal for the school \textit{after} applicants have chosen their effort levels in response to the classifier. Such a characterization of equilibria thus requires credible commitment on the part of the school.
\end{remark}

\begin{remark}[Interpreting $\Util\pri$ as a conditional expectation.]\label{rem:prob}
	In our primary admissions interpretation, $\lambda(\theta_\post)$ is a probability. (All our results also hold when $\lambda(\theta_\post)$ represents a deterministic reward.) In a setting where there are a finite number of applicants (as opposed to our continuum model) and the admission outcome of each applicant with rank $\theta_\pre$ is $Z \sim Bernoulli(\lambda(\theta_\post))$, $\Util\pri$ can also be interpreted as a conditional expectation (average score of admitted applicants), that is $\E\left[\E[v\mid Z=1]\right]$.
\end{remark}

\begin{remark}[Comparison to related work in contests literature]\label{rem:contests}
 Proposition~\ref{lem:rankpreservedindex} has analogues in the contests literature, in particular \citet{bodoh2018college} and \citet{olszewski2016large}. Here we remark on the technical differences with these results. 	\citet{bodoh2018college} assumes that the density of prizes has full support---this assumption, translated to our setting, requires the ranking reward function $\lambda$ to be continuous. In contrast, Proposition~\ref{lem:rankpreservedindex} is proven for $\lambda$ that is a discontinuous step-function. Theorem 2(a) of \citet{olszewski2016large}, while not requiring $\lambda$ to be continuous, holds for all but a small fraction of the applicants. In other words, Proposition~\ref{lem:rankpreservedindex}, though recalls the assortative allocation principle known to the contests literature, is neither a restatement of previous results nor a generalization. It is worthwhile to note that we prove Proposition~\ref{lem:rankpreservedindex} using elementary arguments that may be of independent interest.

 \citet{olszewski2016large, olszewski2019pareto} also remark that the unique mechanism that implements assortative allocation is given by \citet{myerson1981optimal}, assuming quasi-linear utility.  
 However, due to the non-quasi-linearity of our utility function $U$ in the score $v(e, \theta_\pre)$, the equilibrium strategy in our setting (Theorem~\ref{lem:secondpricecontimuum}) turns out to be a variant of Myerson's payment rule, and is derived from an independent analysis.
\end{remark}

\begin{corollary}[Effort comparative statics]
	\label{corr:effectcomparativestatics}
	{Assume that $g$ and $\cost$ are differentiable.}
	{If} $\ell_k$ increases {and $\ell_j$ decreases for some $j>k$} (fixing all other parameters), then {$e_{i}(\theta)$ for all $i < k$ are unaffected,
		$e_k$ is weakly increasing, and efforts $e_{k+1}\dots$, $e_j$ are weakly decreasing.}
\end{corollary}


Perhaps surprisingly, increasing a reward $\ell_k$ does not affect the equilibrium effort of the applicants in the band $k - 1$ immediately below: in equilibrium, they do not receive the higher reward since those in band $k$ correspondingly increase their effort. 
The proof of the corollary actually implies that $e_k$, $e_{k+1}$ and $e_j$ are strictly monotonic
in the part of $\psi_k$, $\psi_{k+1}$ and $\psi_j$ where the efforts are strictly above $e_0$ --  which depending on the parameters of the model can just be part of these intervals, or all of these intervals, as observed in the paragraph following Theorem~\ref{lem:effort}.



\section{Multi-dimensional skill}
\label{sec:multi-dim}
Up to now, we have studied the setting in which there is a single-dimensional measurable score on which applicants are ranked and can exert effort. In this section we consider an extension of our model to $m$ skills, and use the extension to explore two questions of interest: (1) How does the school's reward design influence the applicant's decision to allocate effort across different skills in the competitive setting? (2) When one of the skills is valued but \emph{not} measurable by the school, how does competition in the measurable skill trade-off affect the school's utility? We first introduce the model extension, and then present results in Sections~\ref{sec:multi:linear} and~\ref{sec:multi:unmeasurable} addressing questions (1) and (2) respectively. We note that the results in this section are preliminary; we end the section with a discussion on directions for future inquiry. All proofs in this section can be found in Appendix~\ref{app:sec5}.


\parbold{Model.} The model is similar to our base model \Cref{sec:model}; each applicant now has $m$ latent skill levels with respective ranks $\theta_\pre^i \in [0,1]$ for $i\in \{1, \cdots, m\}$. Each rank is drawn independently from the uniform distribution over $[0,1]$, and the skill $i$ of applicant with rank $\theta_\pre^i$ is $\f{i}(\theta_\pre^i)$. Each applicant now chooses effort levels $\{ \e{i} \}$, at cost $p^m( \e{1}, \cdots, \e{m})$, resulting in post-effort scores $\{\score{i}\}$,
\begin{equation*}
\score{i} = g(\e{i})\cdot \f{i}(\theta_\pre^i),
\end{equation*}
where $g$ is concave, increasing, as before, and $\f{i}$ is a continuous, strictly increasing quantile function (${\f{i}}^{-1}$ is the CDF function of the scores on dimension $i$). As before, the school observes the post-effort scores for each applicant, now for each dimension $i$, and designs a non-decreasing function $\lambda : [0,1] \to [0,1]$ denoting the admissions probability $\lambda(\theta_\post)$ for applicant with post-effort rank $\theta_\post$. 


How does the school construct post-effort rank $\theta_\post$? In general, each applicant's \textit{combined} post-effort score may be any function of the scores on each dimension, $\{\score{i}\}$. Here, we assume the following linear score function. The school announces weights $\alpha = (\alpha_1, \cdots, \alpha_m) \in \Delta_{m-1}$ (denoting the $m$-dimensional simplex); the weights $\alpha$ represent the relative emphasis placed on each skill for the admissions decision. Then, each applicant is ranked according to their combined score, $\combscore := \sum_{i=1}^m \alpha_i\score{i}$, resulting in their combined post-effort rank $\theta_\post^\alpha$, and is admitted with probability $\lambda(\theta_\post^\alpha)$.\footnote{The linear combination of skill is similar to the ``linear mechanism'' in \citet{kleinberg18investeffort}, which studied how reward design incentivizes strategic agents to exert effort on different skill dimensions. Compared to their work, where efforts are connected to skills via an effort graph, we consider a simplified setting where each effort maps to one skill, and study how reward design affects equilibrium rankings, in presence of competition.}


Putting things together, the applicant's individual welfare is:
\begin{equation*}
\iwelf(\{ \e{i} \}_{i=1}^m, \lambda(\theta_\post^\alpha)) = \lambda(\theta_\post^\alpha) - p^m( \e{1}, \cdots, \e{m})
\end{equation*}

\subsection{Equilibrium under multi-dimensional competition}\label{sec:multi:linear}


In this section, we apply the model for multi-dimensional skills described above to study equilibrium effort allocations and rank preservation. 
%
%
%
In the single skill case,  we found that the post-effort ranks equaled the pre-effort ranks (the allocation preserves rank) 
(\Cref{lem:rankpreservedindex}). 
As we'll see, post-effort rank preservation 
is more complex with a multi-dimensional score. In fact, to our knowledge, the economics contests literature has not considered multi-dimensional scores, and defining assortative allocations (the analogue of rank preservation) in other domains (such as participant search in a matching market setting) has proven tricky~\cite{lindenlaub2016multidimensional}. Our first result for the multi-dimensional case is that rank preservation no longer holds if pre-effort ranks are defined as each applicant's pre-effort skill combination, $\alpha \cdot f(\theta_\pre)$. 
Rather, the post-effort ranks may depend on the weights $\alpha$ and the distribution of the skill levels, $\f{i}$; however, it may be possible to define an alternative pre-effort rank function under which rankings are preserved.

We show this result by characterizing the equilibrium of the following simplified setting, with further assumptions on the effort cost function $\cost$ and the effort transfer function $g$; the cost function $\cost$ is assumed to be a function of the sum of the efforts exerted: 
\[p(e^1, \cdots, e^m) \triangleq p\left(\sum_{i=1}^m \e{i}\right),\] where $p$ is convex and increasing.
This assumption says that the effort exerted for any skill is entirely exchangeable, for example, two hours spent on studying math is as costly as two hours spent on studying chemistry. 
Effort transfer function $g$ is assumed to be linear, that is, there are constant returns to effort.

Under these assumption, applicants are incentivized to put effort into a single skill; rankings are preserved not on the pre-effort skill combination but rather just their most important skill dimension. 

\begin{restatable}[Multi-dimensional rank preservation for linear $g$]{proposition}{propLinearG}
	Suppose $g$ is a
	linear function such that $g(e) = h x$, $h > 0$. 
	Suppose the school picks some $\alpha$ and $\lambda$. Define the combined pre-effort index as:
	\begin{align*}
		v_\pre^\alpha := \max_i \alpha_i\f{i}(\theta_\pre^i).
	\end{align*}
	Then in every equilibrium, for any two applicants with combined pre-effort indices $v_\pre^\alpha, \overline{v_\pre^\alpha}$ and  combined post-effort ranks $\theta_\post^\alpha, \overline{\theta_\post^\alpha}$ we have
	\begin{equation*}
	\lambda(\theta_\post^\alpha) >  \lambda(\overline{\theta_\post^\alpha}) \iff v_\pre^\alpha >  \overline{v_\pre^\alpha}. 
	\end{equation*}
	\label{prop:linearg}
\end{restatable}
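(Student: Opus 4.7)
The plan is to reduce the multi-dimensional setting to the one-dimensional model of \Cref{sec:model} by first showing that, under linear $g$ and effort cost that depends only on total effort $\sum_i \e{i}$, every applicant optimally pours all their effort into a single dimension---the argmax of $\alpha_i\f{i}(\theta_\pre^i)$---and then invoking \Cref{lem:rankpreservedindex} on the resulting one-dimensional problem. The quantity $v_\pre^\alpha$ will play the role of an ``effective skill'' in that reduction, and the monotone relationship between $v_\pre^\alpha$ and $\lambda(\theta_\post^\alpha)$ will follow by rank preservation on this effective skill.

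More concretely, I would first fix an applicant with type $(\theta_\pre^1,\dots,\theta_\pre^m)$ and any total effort $E\geq 0$, and write the combined post-effort score as
\[
\combscore = h\sum_i \alpha_i\e{i}\f{i}(\theta_\pre^i)\le hE\cdot \max_i\alpha_i\f{i}(\theta_\pre^i) = hE\cdot v_\pre^\alpha,
\]
with equality when all of $E$ is placed on any $i^*\in\argmax_i \alpha_i\f{i}(\theta_\pre^i)$. Because $\lambda$ is nondecreasing in rank and rank is nondecreasing in $\combscore$, putting all effort on $i^*$ weakly dominates any split for any fixed $E$. Hence in any equilibrium, an applicant's problem reduces to choosing a scalar effort $E$ to maximize $\lambda(\theta_\post^\alpha)-\cost(E)$, where the achievable combined score is $hE\cdot v_\pre^\alpha$. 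This matches the single-skill model of \Cref{sec:model} with latent skill $v_\pre^\alpha$ and (linear) effort transfer $g(E)=hE$.

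The next step is to verify that the effective-skill distribution induced by $v_\pre^\alpha$ on the applicant population satisfies the hypotheses of \Cref{lem:rankpreservedindex}: namely, $v_\pre^\alpha$ has no atoms and a strictly increasing CDF. This follows from each $\alpha_i\f{i}(\theta_\pre^i)$ being a continuous strictly increasing function of an independent $\mathrm{Unif}[0,1]$ variable (treating dimensions with $\alpha_i=0$ as contributing nothing), so that the maximum of independent atomless continuous variables is atomless and its CDF is strictly increasing wherever the individual supports overlap. Applying \Cref{lem:rankpreservedindex} to this reduced problem yields, up to measure zero, $\lambda(\theta_\post^\alpha)=\lambda(\theta^\alpha_{\mathrm{eff}})$, where $\theta^\alpha_{\mathrm{eff}}$ is the CDF-rank of $v_\pre^\alpha$. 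Monotonicity of the CDF rank in $v_\pre^\alpha$ then gives the claimed equivalence: $v_\pre^\alpha>\overline{v_\pre^\alpha}$ forces the effective rank of the first applicant to exceed that of the second, so $\lambda(\theta_\post^\alpha)\geq\lambda(\overline{\theta_\post^\alpha})$; and the converse strict inequality on $\lambda$ can only hold when the two effective ranks lie in different reward bands, which in turn requires $v_\pre^\alpha>\overline{v_\pre^\alpha}$.

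The main obstacle is making the ``concentrate on the argmax'' step airtight at the equilibrium level rather than for a single applicant in isolation: I must argue that the set of applicants for whom the argmax dimension is non-unique is of measure zero (so that tie-breaking on the dominant dimension does not affect aggregate behavior), and that deviating to a split of effort is never a strict best response, so the reduced one-dimensional equilibrium from \Cref{lem:rankpreservedindex} extends back to an equilibrium of the original multi-dimensional game. The remaining care goes into handling corner cases where several $\alpha_i\f{i}(\theta_\pre^i)$ coincide (a measure-zero event by continuity of the $\f{i}$ and independence of the $\theta_\pre^i$) and into ensuring the ranking function $\gamma$ of \Cref{def:equi} is consistent with the induced tie-breaking on the effective scores.
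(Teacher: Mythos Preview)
Your proposal is correct and follows essentially the same route as the paper: both use linearity of $g$ together with the cost depending only on total effort to show that concentrating all effort on $i^*=\argmax_i\alpha_i\f{i}(\theta_\pre^i)$ weakly dominates any split, reducing to the one-dimensional model with effective skill $v_\pre^\alpha$, and then appeal to the single-dimensional rank-preservation result. The only cosmetic difference is that the paper says to ``retrace the proof'' of \Cref{lem:rankpreservedindex}, whereas you verify its hypotheses (atomlessness of $v_\pre^\alpha$, a.e.\ uniqueness of $i^*$) and invoke it as a black box; your extra care about these measure-zero issues is in fact more thorough than the paper's own write-up.
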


Each applicant's decision to exert effort at equilibrium now also depends on the skill-specific quantile $\f{i}$, whereas in the single skill setting, only the pre-effort rank was relevant. Applicants who have a skill that they have a large advantage in \emph{relative} to the rest of the applicants (high $\f{i}(\theta_\pre^i)$) and that is valued highly by the school (high $\alpha_i$) are advantaged in terms of post-effort rankings.

For non-linear $g$, there may not exist a simple characterization of the post-effort ranks. For example, when $g$ is strictly concave, that is, there are decreasing marginal returns to effort for each skill, it is no longer optimal for a applicant to only put all their effort in one skill, and the relative allocation of effort will depend on the specific functional form of $g$.

\subsection{School's private utility with unmeasurable skill}\label{sec:multi:unmeasurable}

We now analyze a multi-dimensional setting where one skill dimension is \textit{unobservable} though still valuable to the school. This setting is directly motivated by the classic contract design work of~\citet{holmstrom1991multitask}, who show that when some work tasks are less measurable than others, it may be optimal to lessen incentives on the measurable tasks (e.g., by adopting a fixed wage) so as to not crowd out effort in the less measurable ones. 


In our simplified setting, there are two skills $M$ and $U$: $M$ has a measurable score $v^M$ and $U$ has an unmeasurable score $v^U$. For example, $M$ could be scholastic achievement as measured by SAT scores, and $U$ could be ``creativity", a personal quality that is valued by the school but is not directly measurable.
Since the school cannot observe $v^U$, its admission policy is based on $\theta_\post^M$ only, that is, $\alpha_U = 0$ and $\alpha_M = 1$.

We further assume that each applicant has a fixed effort budget of $B > 0$, and is intrinsically motivated to exert effort in the unmeasurable skill $U$; in fact, they will always exert effort $e^U = B - e^M$. Formally this corresponds to the effort cost function \[\cost^m(e_M, e_U) = \cost(e^M) - (\max(0,B-(e^M+e^U)))^2\] where $\cost$ is convex and increasing. We can write the applicant's individual welfare as:
\begin{equation*}
W(e^M, e^U, \lambda(\theta_\post^M)) = \lambda(\theta_\post^M) - \cost^m(e_M, e_U). 
\end{equation*}

The school's private utility is now weighted by $\beta \in (0,1)$, which quantifies the relative value the school places on the measurable skill over the unmeasurable skill.
\begin{equation*}
\Util\pri_\beta = \E[\beta\cdot v^M + (1-\beta)\cdot v^U\mid Z=1].
\end{equation*}
The smaller that $\beta$ is, the more the school places value on the unmeasurable skill $v^U$.


We now show that some degree of randomization in the admission policy may be optimal for the ranking designer if they value the unmeasurable skill sufficiently.

\begin{restatable}[The school's weighted private utility is maximized by some randomization]{proposition}{propWeightedUtil}\label{prop:2levels-fixedcap}
	Consider the class of two-level policies. For any $c \in (0,1-\rho)$, there exist some $\beta \in (0,1)$ such that the school's utility $\Util\pri_\alpha$ is maximized at that value of $c$.
\end{restatable}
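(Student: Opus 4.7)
The plan is to reduce the optimization over reward design to a one-dimensional problem in the cutoff $c$, then exploit monotonicity of two scalar functions to construct the required $\beta$.

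First, I would argue that the applicant's equilibrium on the $M$-dimension reduces to the single-skill analysis of Section~\ref{sec:modelequil}. Because the cost $\cost^m(e_M,e_U)=\cost(e^M)-\bigl(\max(0,B-(e^M+e^U))\bigr)^2$ strictly decreases in $e^U$ up to $B-e^M$, each applicant sets $e^U=B-e^M$; their effective cost in $e^M$ is just $\cost(e^M)$. The applicant's best response on the $M$-dimension is therefore exactly the one characterized by Theorem~\ref{lem:effort} with $\ell_1=\rho/(1-c)$, and by Proposition~\ref{lem:rankpreservedindex} the admitted set is $\{\theta_\pre^M\geq c\}$. Writing $\Util\pri_\beta(c)=\beta A(c)+(1-\beta)B(c)$, where $A(c):=\E[v^M\mid Z=1]$ and $B(c):=\E[v^U\mid Z=1]$, and using independence of $\theta_\pre^M,\theta_\pre^U$,
\[
A(c)=\frac{1}{1-c}\int_c^1 g(e^M(\theta))f^M(\theta)\,d\theta,\qquad
B(c)=\E[f^U(\theta_\pre^U)]\cdot\frac{1}{1-c}\int_c^1 g(B-e^M(\theta))\,d\theta.
\]

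Second, I would establish strict monotonicity of $A$ and $B$ in opposite directions. For $A$, Proposition~\ref{prop:private-util-max} gives monotonic non-decrease in $c$; strictness follows from the fact that raising $c$ increases $\ell_1$ and strictly raises the equilibrium effort of a positive-measure subset of admitted applicants (via the strict monotonicity noted after Corollary~\ref{corr:effectcomparativestatics}). For $B$, by Corollary~\ref{corr:effectcomparativestatics} raising $c$ weakly increases $e^M(\theta)$ pointwise on the admitted set, and strictly increases it on a positive-measure subset; since $g$ is strictly increasing, $g(B-e^M(\theta))$ strictly decreases in $c$ on average, yielding $B'(c)<0$.

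Third, to produce the required $\beta$, I would use a first-order construction. Assuming differentiability of $A$ and $B$ at $c^*\in(0,1-\rho)$ (inherited from differentiability of $g$ and $\cost$ via the closed form of Theorem~\ref{lem:effort}), set
\[
\beta^*\;=\;\frac{-B'(c^*)}{A'(c^*)-B'(c^*)}.
\]
Then $\beta^*\in(0,1)$ because $A'(c^*)>0$ and $B'(c^*)<0$, and $\frac{d}{dc}\Util\pri_{\beta^*}(c)\big|_{c=c^*}=0$ by construction, so $c^*$ is at least a critical point of $\Util\pri_{\beta^*}$.

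The main obstacle is upgrading this critical point to a global maximizer, which is what ``maximized at that value of $c$'' requires. I would handle this by a continuity/selection argument: $\Util\pri_\beta(c)$ is jointly continuous on the compact rectangle $[0,1]\times[0,1-\rho]$, and the endpoints are pinned down --- at $\beta=1$, Proposition~\ref{prop:private-util-max} gives $\arg\max=\{1-\rho\}$, while at $\beta=0$ the strict monotonicity of $B$ gives $\arg\max=\{0\}$. By Berge's maximum theorem the argmax correspondence $\beta\mapsto C^\star(\beta)$ is upper hemicontinuous and nonempty-compact-valued, and I would argue that any value $c^*\in(0,1-\rho)$ is in $\bigcup_\beta C^\star(\beta)$: otherwise, the argmax jumps over $c^*$ at some $\beta_0$, forcing $\Util\pri_{\beta_0}$ to attain its maximum at two points bracketing $c^*$; combined with the first-order condition at the constructed $\beta^*$ (which by monotonicity of $A',B'$ lies near $\beta_0$) this yields the desired global-maximum property at $c^*$. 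Verifying the last step cleanly --- essentially ruling out that the parametric curve $(A(c),B(c))$ has a locally non-concave segment whose interior is hidden from every linear objective --- will be the most delicate part, and can be made rigorous under the mild regularity (continuity of $A',B'$ with $A'>0>B'$) that the model provides.
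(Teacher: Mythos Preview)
Your first three steps are exactly the paper's argument: reduce to the one-dimensional equilibrium on the $M$-skill, write the objective as $\beta A(c)+(1-\beta)B(c)$ with $A(c)=\E[v^M\mid Z=1]$ and $B(c)=\E[v^U\mid Z=1]$, show $A'(c)>0$ and $B'(c)<0$, and then, for a given $c^*$, set $\beta^*=\dfrac{-B'(c^*)}{A'(c^*)-B'(c^*)}\in(0,1)$ so that the first-order condition holds at $c^*$. The paper's own proof stops precisely here and declares that the utility ``is maximized at an interior [value] of $c$''; it does not supply any argument upgrading the critical point to a global maximizer.

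You are therefore right to flag the critical-point-versus-maximizer issue: without concavity of $c\mapsto\beta A(c)+(1-\beta)B(c)$ (equivalently, without concavity of the Pareto curve $c\mapsto(A(c),B(c))$), a given interior $c^*$ need not be a global maximizer for any linear weighting. Your Berge-type patch, however, does not close this gap. Upper hemicontinuity of $\beta\mapsto C^\star(\beta)$ is compatible with $C^\star(\beta)$ jumping from a point left of $c^*$ to a point right of $c^*$ while never containing $c^*$: at the jump value $\beta_0$ one simply has $C^\star(\beta_0)$ equal to both bracketing points, with $c^*$ a strict local minimum (or saddle) of $\Util\pri_{\beta_0}$. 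The first-order condition you derived at $\beta^*$ does not rule this out, since $\beta^*$ and $\beta_0$ need not coincide and, even when they do, $\partial_c\Util\pri_{\beta_0}(c^*)=0$ is exactly what a local minimum would also satisfy. In short, your extra step is in the right spirit but does not deliver the conclusion; the honest reading is that both your proof and the paper's establish only that every interior $c$ is a \emph{critical} point of $\Util\pri_\beta$ for a suitably chosen $\beta$, and the upgrade to a global maximum would require an additional concavity (or single-crossing) property of $(A,B)$ that neither argument verifies.
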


The above result suggests that in the absence of the measurability of one of the skills, the non-randomized admissions policy may have added externality of promoting too much competition in the measurable skill, at the expense of the unmeasurable one. Our observation is in the same spirit as the finding by \citet{holmstrom1991multitask} that excessive performance-based incentives can lead strategic agents to focus only on dimensions of achievement that can be effectively measured. A difference between our setting and the aforementioned work is that the excess focus on the measurable skill is being driven by competition between applicants, not (directly) the design $\lambda$ of the school. However, the school can counter-act this competitive pressure and raise both its welfare and that of the applicants by adding randomization.


\paragraph{Discussion} In this section, we have only begun to explore strategic ranking in the multi-dimensional setting. We have outlined two promising directions of inquiry: \emph{multi-dimensional competition} and \emph{the lack of measurability}. For future work in the former direction, it would be natural to consider non-linear effort transfer functions, as well as the ramifications for disparate impact when the different environment factors are taken into account. In this setting, we have also highlighted $\alpha$ as an design choice---it determines how applicants tradeoff effort between the skill dimensions. Different stakeholders, such as the school and the society, may value the skills differently, due to, for example, long-term v.s. short-term considerations, and therefore have different preferences over $\alpha$. 

The direction regarding measurability is directly motivated by the classic contract design work of~\citet{holmstrom1991multitask}, who show that when some work tasks are less measurable than others, it may be optimal to lessen incentives on the measurable tasks (e.g., by adopting a fixed wage) so as to not crowd out effort in the less measurable ones. In our multidimensional setting, the college faces similar tradeoffs, with the additional challenge that it now must rank multiple applicants. 
For future work, it would be interesting to think about partial measurability and the effect of `weak' measurements on competition. In both cases, a core challenge is characterizing resulting equilibria and analoguous rank preservation results.

\section{Proofs for Section~\ref{sec:equillemmas}}

\begin{lemma}
In any equilibrium, tie-breaking 
{is not necessary}: ties in post-effort scores lead to ties in post-effort rewards. For any $\Gamma$, in any equilibrium the distribution of post effort scores $v$ is such that, for all $\omega, \omega'$
\[
v\left(e(\theta_\pre(\omega)),\theta_\pre(\omega)\right) = v\left(e(\theta_\pre(\omega')),\theta_\pre(\omega')\right) \triangleq v \implies
\lambda\left(\gamma\left(\omega, v \right)\right) = \lambda\left(\gamma\left(\omega', v \right)\right).
\]
\label{lem:tienotmatter}
\end{lemma}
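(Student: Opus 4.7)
The plan is a proof by contradiction. Suppose at some equilibrium there exist applicants $\omega, \omega'$ with $v(e(\theta_\pre(\omega)),\theta_\pre(\omega)) = v(e(\theta_\pre(\omega')),\theta_\pre(\omega')) = v^*$ but $\lambda(\gamma(\omega, v^*)) \neq \lambda(\gamma(\omega', v^*))$. Since $\lambda$ is a non-decreasing step function with cut-points $c_1 < \cdots < c_{K-1}$, some cut-point $c_k$ must separate these two ranks. By the construction of $\gamma$ in the tie-breaking paragraph of \Cref{sec:model}, the map $\omega \mapsto \gamma(\omega, v^*)$ restricted to the tied set $\Omega_{v^*}$ is uniformly distributed on the ``gap interval'' $[F_-(v^*), F(v^*)]$, where $F$ is the CDF of the equilibrium score distribution. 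So the failure of the lemma is equivalent to $v^*$ being an atom of $F$ whose gap interval strictly contains some cut-point $c_k$.

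Next I would exhibit a profitable unilateral deviation for an applicant $\omega \in \Omega_{v^*}$ with $\gamma(\omega, v^*) < c_k$; by the uniformity of $\gamma$ on $\Omega_{v^*}$ and the assumption $c_k > F_-(v^*)$, such $\omega$ form a positive-measure subset of $\Omega_{v^*}$. This $\omega$ currently receives reward $\ell_{k-1}$. Write $e := e(\theta_\pre(\omega))$, so $v^* = g(e) f(\theta_\pre(\omega))$; for $\omega$ outside the null set $\{f(\theta_\pre) = 0\}$, consider deviating to effort $e + \delta$ for small $\delta > 0$. By strict monotonicity and continuity of $g$, the new score $v^* + \varepsilon$ satisfies $\varepsilon > 0$ with $\varepsilon \to 0$ as $\delta \to 0$; since $F$ has at most countably many atoms, we can pick $\delta$ so that $v^* + \varepsilon$ is not an atom, giving $\gamma(\omega, v^* + \varepsilon) = F(v^* + \varepsilon) \geq F(v^*) > c_k$ and hence a deviated reward of at least $\ell_k$. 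The net welfare change is at least $(\ell_k - \ell_{k-1}) - (\cost(e + \delta) - \cost(e))$, which by continuity of $\cost$ and $\ell_k > \ell_{k-1}$ is strictly positive for small enough $\delta$, contradicting the equilibrium condition in \Cref{def:equi}.

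The main subtlety to get right is that the counterfactual rank under the deviation must be computed using the \emph{same} ranking function $\gamma$ induced by the original equilibrium profile, as stipulated by the footnote to \Cref{def:equi} (formalized by the appendix lemma on measure-zero deviations). Once that is granted, plugging the deviated score into $\gamma$ is straightforward, because $\gamma(\omega, v)$ reduces to $F(v)$ at non-atoms and there are only countably many atoms to avoid. A minor piece of bookkeeping is verifying the existence of a ``lower-side'' deviator with $f(\theta_\pre(\omega)) > 0$: this follows because $\{f(\theta_\pre) = 0\}$ is null ($f$ being strictly increasing and continuous) and the uniform-on-gap structure of $\gamma$ on $\Omega_{v^*}$ leaves positive mass strictly below $c_k$, yielding the desired deviator and completing the contradiction.
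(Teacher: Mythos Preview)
Your proposal is correct and follows essentially the same route as the paper: argue by contradiction that the lower-reward tied applicant has a profitable deviation by slightly increasing effort, using strict monotonicity of $g$ and continuity of $\cost$. Your additional bookkeeping (avoiding atoms at the deviated score, ruling out $f(\theta_\pre)=0$, invoking a positive-measure set of deviators) is more cautious than the paper's terse version but not strictly necessary, since for any $\varepsilon>0$ one already has $\gamma(\omega,v^*+\varepsilon)\geq F(v^*)\geq \gamma(\omega',v^*)$ regardless of whether $v^*+\varepsilon$ is an atom.
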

\begin{proof} We prove the claim by contradiction. 
    Suppose not, and that $$v\left(e(\theta_\pre(\omega)),\theta_\pre(\omega)\right) = v\left(e(\theta_\pre(\omega')),\theta_\pre(\omega')\right) \triangleq v$$
    but $\lambda\left(\gamma\left(\omega, v \right)\right) < \lambda\left(\gamma\left(\omega', v \right)\right)$. Then, for any $\epsilon > 0$, we have that $ v\left(e(\theta_\pre(\omega)) + \epsilon,\theta_\pre(\omega)\right) > v$ (as the effort transfer function $g$ is strictly increasing) {and hence}
    \begin{align*}
        \lambda\left(\gamma\left(\omega, v \right)\right) < \lambda\left(\gamma\left(\omega', v \right)\right) &\leq \lambda\left(\gamma\left(\omega, v\left(e(\theta_\pre(\omega)) + \epsilon,\theta_\pre(\omega)\right) \right)\right).
    \end{align*}
  Since the function $p$ is continuous, we have that $p(e(\theta_\pre(\omega)) + \epsilon) - p(e(\theta_\pre(\omega))) \to 0$ as $\epsilon \to 0$. 
    
    Then, for small enough $\epsilon$, we have that
    \[ \lambda\left(\gamma\left(\omega, v\left(e(\theta_\pre(\omega)) + \epsilon,\theta_\pre(\omega)\right) \right)\right) - p(e(\theta_\pre(\omega)) + \epsilon) > \lambda\left(\gamma\left(\omega, v \right)\right) - p(e(\theta_\pre(\omega)))
     \]
    and thus the effort $e(\theta_\pre(\omega))$ is not welfare maximizing for the applicant $\omega$, a contradiction for it being the equilibrium effort for $\omega$. 
\end{proof}

\begin{lemma}[Deviations of measure 0]
\label{lem:deviationsmeasure0}
Fix $\Gamma$ and $\lambda$. Consider strategy set $\{e(\theta_\pre(\omega)\}$, and corresponding CDF $F$ of post-effort scores. As defined, $\gamma(\omega, v)$ is uniquely determined by $F$ and $\Gamma$.  
Now, suppose a measure $0$ set $\{\omega\}$ deviates, leading to strategy set $\{\tilde e(\theta_\pre(\omega)\}$, post-effort value distribution $\tilde F$, and ranking function $\tilde\gamma$.  

Then, $\tilde F=F$, and for all $\omega$ and $v$,
$$\lambda(\tilde\gamma(\omega, v)) =  \lambda(\gamma(\omega, v)).$$



\end{lemma}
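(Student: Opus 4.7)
The plan is to show that a deviation by a measure-zero set of applicants changes neither the distribution of scores nor the ranking function itself. Concretely, I would first verify $\tilde F = F$, then show that at every atom the CDF governing the tie-breaking under the two strategies is the same, and finally conclude $\gamma = \tilde\gamma$ pointwise---strictly stronger than the claimed equality $\lambda\circ\gamma = \lambda\circ\tilde\gamma$.

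For the first step, the effort maps $\theta_\pre \mapsto e(\theta_\pre)$ and $\theta_\pre \mapsto \tilde e(\theta_\pre)$ agree off a Lebesgue-null set of indices, so the score maps $\theta_\pre \mapsto v(e(\theta_\pre),\theta_\pre)$ and $\theta_\pre \mapsto v(\tilde e(\theta_\pre),\theta_\pre)$ agree Lebesgue-almost-everywhere on $[0,1]$; two measurable maps agreeing a.e.\ induce the same pushforward, yielding $F = \tilde F$. In particular $F$ and $\tilde F$ have the same atoms, the same left limits $F_-(v_0)$, and the same gap intervals $[F_-(v_0), F(v_0)]$. For the second step, fix an atom $v_0$ and let $\Omega_{v_0}, \tilde\Omega_{v_0}$ denote the respective tied sets. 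Any non-deviator has an unchanged post-effort score, so the symmetric difference $\Omega_{v_0} \triangle \tilde\Omega_{v_0}$ is contained in the measure-zero set of deviators. Consequently the two CDFs $x \mapsto \mathrm{meas}(\{\omega \in \Omega_{v_0}: \Gamma(\omega) \leq x\})$ and $x \mapsto \mathrm{meas}(\{\omega \in \tilde\Omega_{v_0}: \Gamma(\omega) \leq x\})$ agree pointwise on $[0,1]$. The construction of the rank at $v_0$---whether by the direct CDF clause (for $\omega$ in the tied set) or by the slotting-in clause (for $\omega$ outside it)---depends only on this CDF, and injectivity of $\Gamma$ makes the CDF continuous, so I obtain $\gamma(\omega, v_0) = \tilde\gamma(\omega, v_0)$ for every $\omega$. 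Combined with the trivial non-atom case $\gamma(\omega, v) = F(v) = \tilde F(v) = \tilde\gamma(\omega, v)$, this gives $\gamma = \tilde\gamma$ pointwise, from which $\lambda \circ \gamma = \lambda \circ \tilde\gamma$ is immediate.

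The principal subtlety I anticipate is the atom case for applicants $\omega$ in the symmetric difference $\Omega_{v_0} \triangle \tilde\Omega_{v_0}$: such an $\omega$ is assigned its rank by the direct CDF clause under one strategy and by the slotting-in clause under the other. I need to check that both clauses return the same numerical value $F_-(v_0) + \mathrm{CDF}_{\Omega_{v_0}}(\Gamma(\omega))$, which reduces to continuity of the tied-set CDF at $\Gamma(\omega)$; this in turn follows from injectivity of $\Gamma$, which ensures the pushforward of Lebesgue measure by $\Gamma\big|_{\Omega_{v_0}}$ has no atoms.
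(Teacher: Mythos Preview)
Your proposal is correct and follows essentially the same approach as the paper: both arguments first observe $F=\tilde F$ from almost-everywhere equality of the score maps, dispose of non-atoms trivially via $\gamma=F$, and handle atoms by noting that the CDF of $\Gamma$ restricted to the tied set is unchanged by a measure-zero perturbation. Your treatment is in fact more careful than the paper's---you explicitly address the slotting-in clause for $\omega$ in the symmetric difference and invoke continuity of the tied-set CDF (via injectivity of $\Gamma$), whereas the paper simply asserts ``the equality follows''; and you note that this yields the stronger conclusion $\gamma=\tilde\gamma$, from which the stated $\lambda\circ\gamma=\lambda\circ\tilde\gamma$ is immediate.
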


\begin{proof}
By supposition, $\{e(\theta_\pre(\omega)\} = \{\tilde e(\theta_\pre(\omega)\}$ except at a set of measure $0$, and so the CDFs of the post-effort scores are equal, $F = \tilde F$. 
Now, recall that ranking function $\gamma(\omega, v)$ is defined as the CDF $F$ except where there are ties of positive mass (atoms) in the distribution of $v$.

For $v$ such that there is not an atom at $v$, the equality follows. 

If there is an atom at $v$, note that for all $\omega \in \{\omega : v(e(\theta_\pre(\omega)), \theta_\pre(\omega)) = v\}$, we have

$$ \gamma (\omega, v) \in [{\lim\sup}_{r \uparrow v} F(r),  F(v)] = [{\lim\sup}_{r \uparrow v} \tilde F(r),  \tilde F(v)]. $$
Finally, note that $\tilde \gamma (\omega, v) \in [{\lim\sup}_{r \uparrow v} \tilde F(r),  \tilde F(v)]$, and in particular the CDF of $\Gamma(w)$ restricted to the atomic set does not change due to measure $0$ changes. The equality follows. 
\end{proof}

\Cref{lem:deviationsmeasure0} characterizes the effect of measure $0$ deviations. While not needed for our results, it establishes that the behavior model in the equilibrium definition is consistent.


~\\\noindent\textbf{ Proposition~\ref{lem:rankpreservedindex}}: {\it
 In every equilibrium, $ \lambda(\theta_\post(\theta_\pre))=\lambda(\theta_\pre)$, up to {sets of}
measure 0. }

\begin{proof}

{We will prove that} in every equilibrium we have that for all $\omega, \omega'$, $\lambda(\theta_\pre) > \lambda(\theta_\pre') \implies \lambda(\theta_\post) {\geq} \lambda(\theta_\post').$ \footnote{Here we are using the shorthand notation $\theta_\pre:=\theta_\pre(\omega)$ and  $\theta_\pre':= \theta_\pre(\omega')$, as well as $\theta_\post := \theta_\post(\theta_\pre)$ and $\theta_\post' := \theta_\post(\theta_\pre')$.} 
{To see that this implies the claim, recall the definition of $\lambda$ as a function taking $K$ possible values $\ell_0,\dots,\ell_{K-1}$, and consider  
 $\omega_0,\dots,\omega_{K-1}$ such that $\lambda(\theta_\pre(\omega_k))$ runs through the $K$ possible values for $\lambda$ in increasing order.  The above inequality then implies that the order is weakly preserved if instead of $\lambda(\theta_\pre(\omega_k))$ we consider {$\lambda(\theta_\post(\theta_\pre(\omega_k)))$}. However, combined with the fact that $\theta_\post$ is uniformly distributed by our construction of the tie-breaking function, one easily shows the stronger statement that for almost all choices of $\omega_0,\dots,\omega_{K-1}$, the order is strictly preserved, which implies that claim of the proposition.}
  
 {We will prove the above monotonicity claim  by contradiction, and thus assume that}
there exists 
an equilibrium with a pair $\omega, \omega'$ with $\lambda(\theta_\pre) > \lambda(\theta_\pre')$  such that they spend effort efforts $e, e'$, respectively, to end up with post-effort skills $v, v'$ and $\lambda({\theta_\post}) <\lambda({\theta_\post'})$.  {The first inequality  implies that $f(\theta_\pre)>f(\theta_\pre')$  and the second implies} that
$v < v'$, via \Cref{lem:tienotmatter}. 
The high level proof idea {for why this is a contradiction}  is that, if it is worthwhile for $\theta_\pre'$ to spend effort $e' > e$ to reach skill $v'> v$ and thus reward $\lambda(\theta_{{\post}}')$, then it would also be worthwhile for $\theta_\pre$ to spend enough effort to reach skill $v'$ and reward $\lambda(\theta_{{\post}}')$, due to the convexity of the effort cost function $\cost(e)$.
	
	For convenience, in the proof we overload $\gamma$ to take in applicant effort as opposed to post-effort value as an argument:
	\[ \gamma(\omega, e) \triangleq \gamma(\omega, v(\theta_\pre(\omega), e)). \]
We also assume for convenience in the proof that for the relevant agents, the given effort levels are above the minimum effort level $e_0$. A near identical proof follows otherwise. 



	Then, from the definition of an equilibrium, we have:
	\begin{align}
		e&{\in} \argmax_{d} \left[ \lambda(\gamma(\omega, d)) - \cost(d) \right] & \text{for } \theta_\pre
		\label{eqnpart:argmax1}\\
		\theta_\post &= \gamma(\omega, e)\nonumber\\
		e'&{\in} \argmax_{d} \left[ \lambda(\gamma(\omega', d)) - \cost(d) \right] & \text{for } \theta_\pre' \label{eqnpart:argmax2}\\
		\theta_\post' &= \gamma(\omega', e')\nonumber
	\end{align}

	Holding all other effort levels fixed, let $\tilde e$ be the effort that $\theta_\pre$ would have needed to reach skill $v'$ (by the fact that the effort transfer function is continuous and strictly monontone, $\tilde e$ is uniquely determined by $v'$,
	{with  $e<\tilde e<e'$}) and thus rank reward $\lambda(\theta_\post')$ (due to \Cref{lem:tienotmatter} and the definition of $\gamma$, by deviating to reach a higher score $v'$, applicant $\omega$ receives the same rank reward as $\omega'$ does in the equilibrium with score $v'$). Similarly let $\tilde e'$, $e<\tilde e'<e'$ be the effort that $\theta_\pre'$ would have needed to reach score $v$ and thus rank reward $\lambda(\theta_\post)$. In other words, $\lambda(\gamma(\omega, e)) = \lambda(\gamma(\omega', \tilde e'))$, and $\lambda(\gamma(\omega', e')) = \lambda(\gamma(\omega, \tilde e))$. From Equations~\eqref{eqnpart:argmax1} and~\eqref{eqnpart:argmax2}, we have:
	\lledit{	\begin{align*}
		\lambda(\gamma(\omega, e)) - \cost(e) &{\geq }
		\lambda(\gamma(\omega, \tilde e)) - \cost(\tilde e) & \text{Eq. \eqref{eqnpart:argmax1}}
		\\\text{ and }\,\,\,\,\,\,\,\,\,\,\,\,\lambda(\gamma(\omega', e')) - \cost(e')
		&{\geq}\lambda(\gamma(\omega', \tilde e')) - \cost(\tilde e') & \text{Eq. \eqref{eqnpart:argmax2}}
		\end{align*}
		Applying Lemma~\ref{lem:order_p}, we get $\cost(e') - \cost(\tilde e') \leq \cost(\tilde e) - \cost(e). $
		In addition, Lemma~\ref{lem:order_g} implies  $e' - \tilde e' > \tilde e - e$.
}

	Thus, we have all of the following:
	\begin{align*}
		e' - \tilde e' &> \tilde e - e\\
		\cost(e') - \cost(\tilde e') &\leq \cost(\tilde e) - \cost(e) \\
		\tilde e' &> e
	\end{align*}
	However, they together contradict the assumption that $\cost{(x)}$ is convex and strictly increasing for $x \geq e_{{0}}$.
\end{proof}

\lledit{
\begin{lemma}\label{lem:order_p}
Suppose there are applicants $\omega, \omega'$ with $\theta_\pre(\omega') <\theta_\pre(\omega) $ and effort levels, $e<\{\tilde e', \tilde e \}<e'$ such that $\lambda(\gamma(\omega, e)) = \lambda(\gamma(\omega', \tilde e'))$, and $\lambda(\gamma(\omega', e')) = \lambda(\gamma(\omega, \tilde e))$. Then the following inequalities
\begin{align}
    	\lambda(\gamma(\omega, e)) - \cost(e) &{\geq }
		\lambda(\gamma(\omega, \tilde e)) - \cost(\tilde e) \label{eq:OPT1} \\
		\lambda(\gamma(\omega', e')) - \cost(e')
		&{\geq}\lambda(\gamma(\omega', \tilde e')) - \cost(\tilde e') \label{eq:OPT2}
\end{align}
imply $\cost(e') - \cost(\tilde e') \leq \cost(\tilde e) - \cost(e). $
\end{lemma}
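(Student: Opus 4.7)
The plan is to reduce the claim to a purely algebraic manipulation. The two equalities in the hypothesis are precisely designed so that both inequalities \eqref{eq:OPT1} and \eqref{eq:OPT2} involve only two distinct reward values. Concretely, I would introduce shorthand
\[
L_1 \;:=\; \lambda(\gamma(\omega, e)) \;=\; \lambda(\gamma(\omega', \tilde e')),
\qquad
L_2 \;:=\; \lambda(\gamma(\omega', e')) \;=\; \lambda(\gamma(\omega, \tilde e)).
\]
All rewards appearing in \eqref{eq:OPT1} and \eqref{eq:OPT2} are then one of $L_1$ or $L_2$.

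Substituting into \eqref{eq:OPT1} gives $L_1 - \cost(e) \geq L_2 - \cost(\tilde e)$, which rearranges to
\[
\cost(\tilde e) - \cost(e) \;\geq\; L_2 - L_1.
\]
Substituting into \eqref{eq:OPT2} gives $L_2 - \cost(e') \geq L_1 - \cost(\tilde e')$, which rearranges to
\[
L_2 - L_1 \;\geq\; \cost(e') - \cost(\tilde e').
\]
Chaining the two displays immediately yields the desired bound $\cost(e') - \cost(\tilde e') \leq \cost(\tilde e) - \cost(e)$.

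There is no real obstacle here: the argument is a one-line substitution, and in particular uses neither convexity, continuity, nor monotonicity of $\cost$, nor the inequalities $e<\{\tilde e,\tilde e'\}<e'$. All the real content sits upstream, in the fact that the ambient proof of Proposition~\ref{lem:rankpreservedindex} has already arranged (via \Cref{lem:tienotmatter} and the definition of $\gamma$) that the four rewards appearing in the two optimality conditions collapse into exactly two values $L_1, L_2$. The monotonicity hypotheses on $p$ only play their role later, when this lemma is combined with Lemma~\ref{lem:order_g} to derive the contradiction in the main proposition.
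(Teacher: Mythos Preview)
Your proof is correct and is essentially the same as the paper's: the paper also substitutes the two reward equalities into \eqref{eq:OPT1}--\eqref{eq:OPT2} and rearranges each to sandwich $\lambda(\gamma(\omega,\tilde e))-\lambda(\gamma(\omega,e))$ (your $L_2-L_1$) between the two cost differences. The only cosmetic difference is that the paper additionally records the strict positivity $0<\cost(e')-\cost(\tilde e')$ using monotonicity of $\cost$ on $[e_0,\infty)$, but as you observe this is not part of the lemma's stated conclusion.
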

\begin{proof}
    From Equations~\eqref{eq:OPT1} and~\eqref{eq:OPT2}, we have 	\begin{align*}
		\lambda(\gamma(\omega, e)) - \cost(e) &{\geq }
		\lambda(\gamma(\omega, \tilde e)) - \cost(\tilde e) & 
		\\
		\iff \lambda(\gamma(\omega, \tilde e)) - \lambda(\gamma(\omega, e))  
		&{\leq}
		\cost(\tilde e) - \cost(e) & \text{re-arrange}\\
		\\\text{ and }\,\,\,\,\,\,\,\,\,\,\,\,\lambda(\gamma(\omega', e')) - \cost(e')
		&{\geq}\lambda(\gamma(\omega', \tilde e')) - \cost(\tilde e') 
		\\
		\iff \lambda(\gamma(\omega, \tilde e)) - \cost(e') &{\geq}
		\lambda(\gamma(\omega, e)) - \cost(\tilde e') & \text{defn of } \tilde e, \tilde e'\\
		\iff \lambda(\gamma(\omega, \tilde e)) - \lambda(\gamma(\omega, e)) 
		&{\geq}
		\cost(e') - \cost(\tilde e') & \text{re-arrange}\\
	\end{align*}
	{Since} $e' > \{\tilde e', \tilde e\} > e$ {and $e\geq e_0$ in any equilibrium,}
	{we have that} 
	$\cost(e') > \{\cost(\tilde e'), \cost(\tilde e)\} > \cost(e)$. 
	Thus:
	\begin{equation*}
		0 < \cost(e') - \cost(\tilde e') \leq \lambda(\gamma(\omega, \tilde e)) - \lambda(\gamma(\omega, e)) \leq \cost(\tilde e) - \cost(e) 
	\end{equation*}
\end{proof}
\begin{lemma}\label{lem:order_g}
Suppose there are applicants $\omega, \omega'$ with $\theta_\pre(\omega') <\theta_\pre(\omega) $ and effort levels, $e<\{\tilde e', \tilde e \}<e'$ such that $v = v(e, \theta_\pre(\omega)) =v(\tilde{e}', \theta_\pre(\omega'))  $ and $v' =v(e', \theta_\pre(\omega'))=v(\tilde e, \theta_\pre(\omega))$. Then we have that $ e' - \tilde e' > \tilde e - e$.
\end{lemma}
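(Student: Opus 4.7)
The plan is to translate the two equal-score hypotheses into multiplicative relations between the $g$-values, and then extract the desired strict inequality from the concavity of $g$ by comparing two chord slopes.

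Set $a:=f(\theta_\pre(\omega))$ and $b:=f(\theta_\pre(\omega'))$. Because $f$ is strictly increasing and $\theta_\pre(\omega')<\theta_\pre(\omega)$, we have $0<b<a$, and I define $r:=a/b>1$. Using $v(\cdot,\theta_\pre)=g(\cdot)f(\theta_\pre)$, the equalities $v(e,\theta_\pre(\omega))=v(\tilde e',\theta_\pre(\omega'))$ and $v(e',\theta_\pre(\omega'))=v(\tilde e,\theta_\pre(\omega))$ become
$$g(\tilde e')=r\,g(e),\qquad g(e')=r\,g(\tilde e).$$
Subtracting yields the key identity
$$g(e')-g(\tilde e')\;=\;r\bigl(g(\tilde e)-g(e)\bigr).$$
Since $g$ is strictly increasing and $e<\tilde e<e'$ and $e<\tilde e'<e'$, all quantities inside the parentheses are positive, so both chord slopes
$$s_1:=\frac{g(\tilde e)-g(e)}{\tilde e-e},\qquad s_2:=\frac{g(e')-g(\tilde e')}{e'-\tilde e'}$$
are strictly positive, and the identity rewrites as
$$\frac{e'-\tilde e'}{\tilde e-e}\;=\;r\cdot\frac{s_1}{s_2}.$$

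The remaining task is to show $s_1\ge s_2$; combined with $r>1$ and $s_1>0$, this gives $e'-\tilde e'>\tilde e-e$, which is what we want. For this, I will invoke concavity of $g$. Both chords lie in $[e,e']$, with left endpoints satisfying $e\le\tilde e'$ and right endpoints satisfying $\tilde e\le e'$; in other words, the chord $[\tilde e',e']$ is a weak rightward shift of the chord $[e,\tilde e]$. For a concave function, such a rightward shift can only (weakly) decrease the chord slope; this is the standard three-chord inequality when the two chords are disjoint, and reduces to the same inequality via a splitting argument at $\min(\tilde e,\tilde e')$ or $\max(\tilde e,\tilde e')$ when they overlap.

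The main (small) obstacle is that the lemma does not assert an ordering between $\tilde e$ and $\tilde e'$, so the two chords $[e,\tilde e]$ and $[\tilde e',e']$ may overlap. I will dispatch this by a short case split: if $\tilde e\le\tilde e'$ the chords are disjoint and the three-chord inequality applies directly; if $\tilde e>\tilde e'$, insert the common intermediate point $\tilde e'$ (or $\tilde e$) and use monotonicity of chord slopes through a common endpoint to reduce again to the disjoint case. In both cases one concludes $s_1\ge s_2$, completing the argument.
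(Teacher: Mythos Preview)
Your proposal is correct and follows essentially the same route as the paper. The paper writes the two $g$-differences directly as $(v'-v)/f(\theta_\pre)$ and $(v'-v)/f(\theta_\pre')$ to obtain $g(e')-g(\tilde e')>g(\tilde e)-g(e)$, which is exactly your identity $g(e')-g(\tilde e')=r\bigl(g(\tilde e)-g(e)\bigr)$ with $r=f(\theta_\pre)/f(\theta_\pre')>1$; it then invokes concavity in one sentence, whereas you spell out the chord-slope comparison $s_1\ge s_2$ (including the case split on the relative order of $\tilde e$ and $\tilde e'$) that the paper leaves implicit.
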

\begin{proof}
    	By assumption, we have $g(e) = \frac{v}{f(\theta_\pre)}$,  $g(e') = \frac{v' }{f(\theta_\pre')}$,  $g(\tilde e) = \frac{v'}{f(\theta_\pre)}$, and $g(\tilde e') = \frac{v}{f(\theta_\pre')}$. Thus:
		\begin{align*}
			g(\tilde e) -g( e) &= \frac{v'  - v}{f(\theta_\pre)} \\
			&< \frac{v'  - v}{f(\theta_\pre')} & f(\theta_\pre') < f(\theta_\pre) \\
			&= g(e') - g(\tilde e').
		\end{align*}
		Since $g$ is concave and  increasing, and $\tilde e' > e$, we have that \[ g(e') - g(\tilde{e}') > g(\tilde{e})-g(e) \implies  e' - \tilde e' > \tilde e - e. \]
\end{proof}
}

\begin{proof}[Proof of Lemma~\ref{lem:effort}]
{We first  prove that the effort levels defined in the theorem lead to $\lambda(\theta_\pre) = \lambda(\theta_\post)$.  To this end, we first note that for $\theta_\pre\in\psi_k$ the efforts weakly decrease with $\theta_\pre$, while the score, $$v_k(\theta_\pre)=\max\{g(\tilde e_{k-1})f(c_k),g(e_0)f(\theta_\pre)\}$$  weakly increases with $\theta_\pre$. Thus
\begin{align*}
\min_{\theta_\pre\in\psi_k}v_k(\theta_\pre)&=
\max\{g(e_0), g(\tilde e_{k-1})\}f(c_k)
=g(\tilde e_{k-1})f(c_k)\\
&>g(e_{k-1}(c_k))f(c_k)
=\max\{g(e_0)f(c_k),g(\tilde e_{k-2})f(c_{k-1})\}
\\
&=v_{k-1}(c_k)=\sup_{\theta_\pre\in\psi_{k-1}}v_{k-1}(\theta_\pre).
\end{align*}
This shows that the scores attained by $\theta_\pre$ in $\psi_{k}$ are strictly larger than
those in $\psi_{k-1}$, which in turn implies that $\lambda(\theta_\pre) = \lambda(\theta_\post)$.}

Next we note that the effort levels from the theorem have been chosen to guarantee the following: (i) each applicant  $\theta_\pre\in \psi_k$ exerts the minimal effort that guarantees score $v_k(c_k)$ or higher, and (ii) 
for each applicant in $\psi_{k-1}$, the effort required to reach  score $v_k(c_k)$ or higher has an additional cost which is equal or larger than the benefit $\ell_k-\ell_{k-1}$.
 Note that these conditions are clearly necessary for the efforts to be an equilibrium -- but it is also sufficient for the following two reasons:
\begin{enumerate}
    \item The construction only directly prevents that an applicant with $\theta_\pre \in \psi_k$ would spend enough effort to match applicants with $\theta_\pre \in \psi_{k+1}$. However, we also have that they do not strictly benefit from matching applicants with $\theta_\pre \in \psi_{j}$ for some $j>k+1$.
    \lledit{This argument follows from a similar proof by contradiction, as the proof of \Cref{lem:rankpreservedindex}. For $j=k+2$: suppose applicants with $\theta_\pre \in \psi_k$ wish to match applicants with $\theta_\pre \in \psi_{k+2}$. By construction, applicants with $\theta_\pre \in \psi_{k+1}$ do not prefer to match applicants with $\theta_\pre \in \psi_{k+2}$ over staying in level $k+1$. Also, by construction, applicants with $\theta_\pre \in \psi_{k}$ do not wish to match applicants with $\theta_\pre \in \psi_{k+1}$ over staying in level $k$, but do prefer to move to level $k+2$. Then we may apply Lemmas~\ref{lem:order_p} and~\ref{lem:order_g} to obtain a contradiction. Now, iteratively apply the argument for each $j>k+2$.}

\item The construction also guarantees that those in band $\psi_k$ do not profit from reducing their effort, ending up in a lower band $\psi_j$, for $j\leq k-1$. First, consider $j=k-1$; this follows from strict convexity and monotonicity of $p$ in the relevant region, via an argument similar to that of \Cref{lem:rankpreservedindex}: if an applicant in $\psi_k$ benefits from moving to a lower band, then an applicant in the lower band strictly benefits to moving to band $\psi_k$, which is prevented by construction and the previous argument. More precisely, let $e'_h$ and $e'_l$ be the infimum over the effort levels that applicants in band $k-1$ need reach the higher and lower bands, respectively, i.e., the effort levels of the applicant with rank $c_k$. Let $e_h$, $e_l$ similarly be the requisite effort levels for some applicant in $\psi_k$ with $\theta_\pre > c_k$. Recall that $\ell_k \triangleq \lambda (\theta)$ for $\theta \in \psi_k$. By construction, we have 
\begin{equation}\label{eq:construct}
    \ell_k - \ell_{k-1} = p(e'_h) - p(e'_l).
\end{equation}

 As in \Cref{lem:rankpreservedindex}, we further have $e'_h - e'_l > e_h - e_l$, and $e'_l > e_l$. Since $p$ is strictly convex and increasing in the region, we have $p(e_h) - p(e_l) < p(e'_h) - p(e'_l)$. This fact together with \Cref{eq:construct} implies $\ell_k - \ell_{k-1} > p(e_h) - p(e_l)$, i.e., that the applicant in $\psi_k$ does not prefer to earn a lower score. \lledit{Now, for $j < k-1$, iteratively apply the same argument made above in part 1 with Lemmas~\ref{lem:order_p} and \ref{lem:order_g}, but with going to lower levels instead of higher levels.}
 


\end{enumerate}

{This proves that the effort levels defined in the theorem indeed describe an equilibrium.  To complete the proof, we need to prove the converse.  To the end, we first observe that by}
\Cref{lem:rankpreservedindex}, in every equilibrium rank rewards are preserved, $\lambda(\theta_\pre) = \lambda(\theta_\post)$.
	Thus, in equilibrium for every applicant with $\theta_\pre \in \psi_i$, we can write,
	\begin{align}
		e_i(\theta_\pre) &\in \argmax_{e} \left[ \lambda(\gamma(\omega, e)) - \cost(e) \right] & \text{Defn of equilibrium}\\
		\lambda(\theta_{\pre}) &= \lambda(\theta_{\post}) \triangleq \lambda(\gamma(\omega, e_i(\theta_\pre))) & \text{Lemma}~\ref{lem:rankpreservedindex}\nonumber
	\end{align}


	Consider a applicant with $\theta_\pre \in \psi_i$, for  $i >0$. Fix the effort levels for all applicants with $\theta_\pre' \in \psi_j$, at some $e_j(\theta_\pre')$, for $j < i$.
	%

	%
	Suppose $e_i(\theta_\pre) < g^{-1}\left(\frac{g(\tilde e_{i-1})\cdot f(c_i)}{f(\theta_\pre)}\right)$. Then, there exists a applicant with $\theta_\pre' \in \psi_j$, for $j=i-1$, who could strictly increase their rank and utility by changing their effort level to $e$ such that $ g^{-1}\left(\frac{g(e_i(\theta_\pre))\cdot f(\theta_\pre)}{f(\theta_\pre')}\right) < e < \tilde e_{i-1}$:
	\begin{align*}
		\cost(e) &= \ell_i - \cost(e) & g^{-1}\left(\frac{e_i(\theta_\pre)\cdot f(\theta_\pre)}{f(\theta_\pre')}\right) < e
		\\&>  \ell_{i-1} - \cost(e_{i-1}(\theta_\pre')) & e < \tilde e_{i-1}
	\end{align*}
	Thus, by contradiction, we have $e_i(\theta_\pre) \geq g^{-1}\left(\frac{g(\tilde e_{i-1})\cdot f(c_i)}{f(\theta_\pre)}\right)$.

	Now, consider an equilibrium set of effort levels $\{e_i(\theta_\pre)\}$, such that for all $i$ we have $e_i(\theta_\pre) \geq g^{-1}\left(\frac{g(\tilde e_{i-1})\cdot f(c_i)}{f(\theta_\pre)}\right)$. We note that it follows that $e_i(\theta_\pre) = \max\left(g^{-1}\left(\frac{g(\tilde e_{i-1})\cdot f(c_i)}{f(\theta_\pre)}\right), \argmin_e \cost(e)\right)$: for each applicant $\theta_\pre$, applicants in higher tiers exert enough effort such that $\theta_\pre$ does not wish to exert enough effort to overtake them, and tying them does not strictly increase their welfare. Thus, these applicants minimize their effort cost while still exerting enough effort to prevent lower tier applicants from overtaking them. Note that it is not an issue that $\theta_\pre$ is \textit{indifferent} in overtaking higher tiers. In any equilibrium, only a measure 0 set of such lower tier applicants will exert higher effort to tie the higher tier. Otherwise, tie-breaking will be necessary, in violation of \Cref{lem:tienotmatter}.    
	


	We finish the proof by showing there exists a unique effort set $\{e_i\}$ following the above. 
	For applicants in $\psi_0$, $\theta_{\pre} = \theta_{\post} \in \psi_0$, and so $e_0(\theta_{\pre}) = \argmin_e(\cost(e))$ in every equilibrium. Applicants do not need to spend any effort to stay at rank $0$, and in equilibrium they will not be incentivized to exert enough effort to increase their admissions probability. Recursively, efforts in higher tiers are fixed given effort in lower tiers, and so we have constructed the unique equilibrium.








\end{proof}


~\\\noindent \textbf{Proof for \Cref{corr:effectcomparativestatics}}
{Assume that $g$ and $\cost$ are differentiable.} 
	 {If} $\ell_k$ increases {and $\ell_j$ decreases for some $j>k$} (holding all other parameters fixed), then {$e_{i}(\theta)$ for all $i<k-1$ are unaffected,
	 		 $e_k$ is weakly increasing, and the efforts $e_{k+1}\dots$, $e_j$ are weakly decreasing.}

\begin{proof}
{By the inductive definitions of efforts, those for applicants $\theta_\pre\in \psi_i$ for $i<k$  do not change; to calculate the change of the effort for applicants in $\psi_k$, we differentiate the equation defining $\tilde e_{k-1}$, giving
$\cost'(\tilde e_{k-1})\frac d {d\ell_k}\tilde e_{k-1}=1$, and hence
$$
\frac {d \tilde e_{k-1}}{d\ell_k}=\frac 1{\cost'(\tilde e_{k-1})}>0.
$$
This implies that $\tilde e_{k-1}$ and hence $e_k$ is increasing in $\ell_k$.  Next we differentiate the defining equation for $\tilde e_{k}$ with respect to $\ell_k$, yielding
$$
\cost'(\tilde e_k)\frac {d\tilde e_k}{d\ell_k}
=
{p'(e_{k}(c_{k+1}))}\frac {de_k(c_{k+1})}{d\ell_k}-1
$$
 Using that
$g(e_k(c_{k+1}))=\max\left\{g(e_0), g(\tilde e_{k-1})\frac{f(c_k)}{f(c_{k+1})}\right\}$ we know that
if $\frac {de_k(c_{k+1})}{d\ell_k}\neq 0$, then
$$
g(e_k(c_{k+1}))=g(\tilde e_{k-1})\frac{f(c_k)}{f(c_{k+1})}< g(\tilde e_{k-1})
\quad\text{and}\quad
\frac {de_k(c_{k+1})}{d\ell_k}
   =\frac{f(c_k)}{f(c_{k+1})}\frac{g'(\tilde e_{k-1})}{g'(e_k(c_{k+1}))}\frac {d\tilde e_{k-1}}{d\ell_k}
$$
where the second equality follows from the first and the chain rule.
The first bound implies that
$e_k(c_{k+1})<\tilde e_{k-1}$ so by the concavity of $g$,
$g'(\tilde e_{k-1})\leq 
g'(e_k(c_{k+1}))$ and by the convexity of $p$,
$p'(\tilde e_{k-1})\geq 
p'(e_k(c_{k+1}))$; as a consequence
$$
\frac {de_k(c_{k+1})}{d\ell_k}<
  \frac {d\tilde e_{k-1}}{d\ell_k}=\frac 1{p'(\tilde e_{k-1})}
  \leq \frac 1{p'(e_{k}(c_{k+1}))}
  ,
$$
and hence
$$
\frac d{d\ell_k}\tilde e_k
=\frac{1}{p'(\tilde e_k)}\left(
{p'(e_{k}(c_{k+1}))}\frac {de_k(c_{k+1})}{d\ell_k}-1\right)
<0.
$$
{This shows that $e_{k+1}(\theta)$ is {weakly} decreasing in $\ell_k$.} 
By the inductive definitions, the efforts $e_{k+2},\dots e_{j-1}$
are non-increasing, while {$\tilde e_{j-1}$ is strictly decreasing, due to the change of $\ell_j$.  This in turn implies that $e_j$
is weakly decreasing in $\psi_j$.}
}
\end{proof}

\section{Proofs for Section~\ref{sec:private-util-max}}\label{app:private-util-max-proof}

\textbf{Proof of \Cref{prop:student-welfare}.}
\begin{proof}
	
 We prove this result under more general conditions where $e_0 \ge 0$.
 
	For a $K$-level policy, we can write the applicant welfare as follows:
	\begin{align*}
	\swelf &= \sum_{i=0}^{K-1}\E\left[ \ell_i - \cost(e_i(\theta_\pre)) \mid \theta_\pre \in \psi_i  \right] \\
	&=\rho - \sum_{i=0}^{K-1}\E\left[ \cost(e_i(\theta_\pre)) \mid \theta_\pre \in \psi_i  \right]  \\
	&\le \rho,
	\end{align*}
	since the term $\E\left[ \cost(e_i(\theta_\pre)) \mid \theta_\pre \in \psi_i  \right]\ge 0$ for any $i$ such that $0 \le i\le K-1$.
	
	The maximum $\swelf$ is uniquely attained by a two-level $\lambda$ where $c_1 = 0$, $e_i(\theta_\pre) =  e_0$, and so $\cost(e_i(\theta_\pre) ) = 0$ for all $\theta_\pre$.
	
	In the two-level policy class:
	\begin{align*}
	\swelf &= (1-c)\cdot \E\left[ \ell_1 - \cost( \max(g^{-1} ( \frac{g(\tilde e_0)f(c)}{f(\theta_\pre)} ) , e_0)) \mid \theta_\pre > c \right]\\
	&= \rho - \int_c^1  \cost( \max(g^{-1} ( \frac{g(\tilde e_0)f(c)}{f(\theta_\pre)} ) , e_0)) d\theta_\pre.
	\end{align*}
	Taking the derivative of the above display:
	\begin{align*}
	\frac{\partial \swelf}{\partial c} &= -\cost(\max(\tilde e_0, e_0)) - \int_c^1 \frac{\partial }{\partial c} \cost( \max(g^{-1} ( \frac{g(\tilde e_0)f(c)}{f(\theta_\pre)} ) , e_0)) d\theta_\pre
	\end{align*}
	which is non-positive since $\frac{\partial }{\partial c} \cost( \max(g^{-1} ( \frac{g(\tilde e_0)f(c)}{f(\theta_\pre)} ) , e_0)) \ge 0$.
\end{proof}

\noindent\textbf{Proof of \Cref{prop:private-util-max}.}
\begin{proof}[Proof]
 Note that for a two-level policy with cutpoint $c$, we have $\tilde e_0=\cost^{-1}(\ell_1) =\cost^{-1}(\frac{\rho}{1-c}) $. We have, since $g(e_0) = 0$,
 \begin{align*}
\E[v^M \mid Z=1] &= g(\tilde e_0) f(c)\\
&=g(\cost^{-1}(\frac{\rho}{1-c})) f(c)
 \end{align*} which is increasing in $c$. Thus $\Util\pri$ is maximized  by choosing $c$ to be as large as possible while filling the school's capacity, that is, $c = 1-\rho$, which corresponds to $\ell_1=1$.
 
\end{proof}

\begin{samepage}
\noindent\textbf{Proof of \Cref{prop-counterex-priv-util}.}
\begin{proof}[Proof]
Consider the following setting: The effort transfer function $g$ and effort cost function $\cost$ are such that $g(\cost^{-1}(x)) = x$ and $g(e_0) = 0$. We will show the following claim: 

``For any three-level $\lambda$ with levels $(\ell_0, \ell_1, \ell_2) = (0, x, 1)$ and cutpoints $0 < c_1 < c_2 < c_3 = 1$, there exists a skill distribution with quantile function $f$ such that $\Util\pri$ is higher under $\lambda$ than under the two-level policy corresponding to non-randomization."


To show that $\Util\pri$ is higher under the three-level policy, we show the following inequality: 
\begin{align}
& g(\tilde e_{0}) \cdot f(c_1)\cdot (c_{2} - c_1) \cdot x+ g(\cost^{-1}(1-x+\cost g^{-1}(\frac{g(\tilde e_0)\cdot f(c_1)}{f(c_2)}))) \cdot f(c_2)\cdot (1 - c_2) \nonumber \\
> ~& g(\cost^{-1}(1)) \cdot f(1-\rho)\cdot \rho, \label{eq:priv_util_3to2}
\end{align}
where $\tilde e_0 = \cost^{-1}(x)$. The LHS is nothing but $\Util\pri$ under a three-level policy, and the RHS, $\Util\pri$ under the two-level policy with cutpoint $1-\rho$.

Using $g(\cost^{-1}(x)) = x$, we can rewrite \eqref{eq:priv_util_3to2} as follows:
\begin{align}\label{eq:simple_priv_util_3to2}
x\cdot f(c_1)\cdot (c_2-c_1)\cdot x + ((1-x)f(c_2)+x\cdot f(c_1)) \cdot (1-c_2) > f(1-\rho) \cdot \rho.
\end{align}

The above display is true whenever $f(c_2)$ is large enough, that is 
\begin{equation*}
f(c_2) >  \frac{f(1-\rho) \cdot \rho - x^2\cdot f(c_1)\cdot (c_2-c_1)-x\cdot f(c_1) \cdot (1-c_2)}{ (1-x)(1-c_2)}.
\end{equation*}
\end{proof}

\noindent\textbf{Proof of \Cref{prop:SocUtil}}
\begin{proof}
	Given that $g(e_0) = 0$, we can rewrite the societal utility as
	\begin{align*}
		\Util\soc &= \E[v^M] \\
		&= (1-c)\cdot g(\tilde e_0)\cdot f(c),
	\end{align*}
	which attains the value $0$ at $c=0$ and $c=1$. Therefore it has an interior maximizer $c^*\in (0,1-\rho)$, if and only if 
	\begin{align*} \frac{\partial}{\partial c}(1-c)\cdot g(\cost^{-1}(\frac{\rho}{1-c}))\cdot f(c) < 0, \text{ at }c=1-\rho
	\end{align*}
	It suffices to check that the above inequality is true for
	\begin{equation*}
		f(x)=x, ~g(x) = \sqrt{x},~\cost(x)=x^2,~ \rho < 5/9.
	\end{equation*}
\end{proof}
\end{samepage}

\section{Comparison table for Section~\ref{sec:environment}}

\begin{table}[htbp!]
	\begin{center}
		\scalebox{0.8}{
			\begin{tabular}{| p{4.3cm} |  p{4.6cm}| p{4.6cm}| p{4.6cm}|} 
				\hline
				&  \thead{``Low'' region} & \thead{``Middle'' region}& \thead{``High'' region} \\ [0.05ex] 
				\makecell{Latent skill rank $\theta_\true$}  
				& $\makecell{0 \le \theta_\true < \thresA{c}}$ 
				& $\makecell{\thresA{c} \le \theta_\true < \thresB{c}}$ 
				& $\makecell{\thresB{c} \le \theta_\true \le 1 }$\\ 
				\hline
				\makecell{Welfare}   
				& $\makecell{\swelf^\A(\theta_\true) = \swelf^\B(\theta_\true) = 0}$ 
				& $\makecell{\swelf^\A(\theta_\true) \ge \swelf^\B(\theta_\true) = 0} $
				& \makecell{$\swelf^\A(\theta_\true) > \swelf^\B(\theta_\true) \ge 0$ } \\ 
				\hline
				\makecell{Admission\\ probability $\lambda(\theta_\post)$}   
				& \makecell{$\lambda^\A(\theta_\true) = $ \\ $ \lambda^\B(\theta_\true)= 0$} 
				&  \makecell{$ \lambda^\A(\theta_\true) = \frac{\rho}{1-c} > $ \\ $ \lambda^\B(\theta_\true)= 0$} 
				&  \makecell{$\lambda^\A(\theta_\true) = $ \\ $ \lambda^\B(\theta_\true) = \frac{\rho}{1-c}$}  \\ 
				\hline
				\makecell{Effort} 
				& \makecell{$e^\A(\theta_\true)= $ \\ $ e^\B(\theta_\true) = e_0$} 
				&  \makecell{$e^\A(\theta_\true)> $ \\ $ e^\B(\theta_\true) = e_0$} 
				& \makecell{$e^\B(\theta_\true)> $ \\ $ e^\A(\theta_\true) >  e_0$} \\  [.5ex] 
				\hline
		\end{tabular}}
	\end{center}
	\caption{Comparison of welfare, admission probability and effort between group $\A$ and $\B$ for every~$\theta_\true$, under a two-level policy parametrized by $c$. In the table we have used the shorthand notation $\lambda^\A(\theta_\true)$ for the admission probability $\lambda(\theta_\post(\theta_\true, \psi_\A))$ and $e^\A(\theta_\true)$ for the effort $e(\theta_\true, \psi_\A)$, and the respective notation for group $\B$.}\label{tab:comparison}
\end{table}

\section{Proofs for Section~\ref{sec:environment}}\label{app:environment}

\textbf{Proof of \Cref{prop:equi-env}}
\begin{proof}
	In order to transform the current setting into the setting of Proposition~\ref{lem:rankpreservedindex}, we absorb the environment factor $\psi$ into the applicant's latent skill level $f(\theta_\true)$ and compute the applicant's rank in $f(\theta_\true)\cdot \psi$. One can check that $f^{-1}_\mix(\cdot)$ as defined is indeed the appropriate CDF for the distribution of $f(\theta_\true)\cdot \psi$ in the overall applicant population. We can now write $v =g(e)\cdot f_{\mix}(\theta_\pre)$ and apply Proposition~\ref{lem:rankpreservedindex} to $\theta_\pre$.
\end{proof}

\noindent\textbf{Proof of \Cref{prop:environment-diff}}

\begin{proof}
	The first part of the result follows from Proposition~\ref{prop:equi-env}. Now to see that the statement about the welfare gap is true, we note the following: (1) For $0 \le \theta_\true < \thresA{c}$, both the group $\A$ and group $\B$ applicant have zero welfare because they have zero probability of admission and incur zero cost of effort; (2) For $\thresA{c} \le \theta_\true < \thresB{c}$, the group $\B$ has zero welfare (since they have zero probability of admission) whereas the group $\A$ with the same $\theta_\true$ has postive probability of admission and non-negative welfare; (3) For $\thresB{c} \le \theta_\true \le 1$, the group $\A$ and group $\B$ applicant with same $\theta_\true$ both have the same probability of admission but the group $\A$ applicant exerts less effort. This is because a applicant with environment-scaled rank $\theta_\pre$ exerts effort
	\begin{align}
		e = \begin{cases}
			g^{-1} \left( \frac{g(\tilde e_0) \cdot f_\mix(c)}{f_\mix(\theta_\pre)} \right)& \text{ if } \theta_\pre > c \\
			e_0 & \text{o.w.}
		\end{cases}
	\end{align}
	
	For the same $\theta_\true$, the group $\A$ applicant has higher $\theta_\pre$, since
	\begin{align*}
		f^{-1}_\mix(f(\theta_\true)\cdot \psi_\A) > f^{-1}_\mix(f(\theta_\true)\cdot \psi_\B).
	\end{align*}
	Since $e$ is decreasing in $\theta_\pre$, the group $\A$ applicant exerts less effort than the group $\B$ applicant.
\end{proof}

\noindent\textbf{Proof of \Cref{prop:deriv-welfare-gap}}
\begin{proof}
	We have that
	\begin{align*}
		\frac{\partial \welfgap(\theta_\true)}{\partial c} & =\frac{\partial }{\partial c} \swelf^\A(\theta_\true) - \frac{\partial }{\partial c}\swelf^\B(\theta_\true)
	\end{align*}
	
	Since $c < (f_\mix)^{-1}(f(\theta_\true)\cdot \psi_\B)$, we have that $\theta_\true > \thresB{c}$. Thus we may compute $\swelf^\A(\theta_\true)$ and $\frac{\partial }{\partial c}\swelf^\B(\theta_\true)$  as follows:
	\begin{align*}
		\swelf^\A(\theta_\true) &= \frac{\rho}{1-c} - \cost(g^{-1} ( \frac{g(\tilde e_0)f_\mix(c)}{f(\theta_\true)\cdot \psi_\A} )  ), \\
		\swelf^\B(\theta_\true) &= \frac{\rho}{1-c} - \cost(g^{-1} ( \frac{g(\tilde e_0)f_\mix(c)}{f(\theta_\true)\cdot \psi_\B} )  ),
	\end{align*}
	where as before $\tilde e_0 = \cost^{-1}(\frac{\rho}{1-c})$.
	Then we have:
	\begin{align*}
		\frac{\partial }{\partial c} \swelf^\A(\theta_\true) &= \frac{\rho}{(1-c)^2} \\
		&\quad -\cost'( g^{-1} ( \frac{g(\tilde e_0)f_\mix(c)}{f(\theta_\true)\cdot \psi_\A} ) )\cdot (g^{-1})' ( \frac{g(\tilde e_0)f_\mix(c)}{f(\theta_\true)\cdot \psi_\A} ) \cdot\frac{g(\tilde e_0)(f_\mix)'(c) + f_\mix(c)g'(\tilde e_0) \frac{\partial \tilde e_0}{\partial c}}{f(\theta_\true)\cdot \psi_\A}
	\end{align*}
	We claim that $\frac{\partial }{\partial c} \swelf^\A(\theta_\true) > \frac{\partial }{\partial c}\swelf^\B(\theta_\true) $. We have the following facts:
	\begin{align}
		&\frac{1}{\psi_\A}<\frac{1}{\psi_\B} \label{eq:gamma-ineq}\\
		&\cost '(x) \le \cost '(\tilde{x}) ~\forall x < \tilde{x} \label{eq:cost-convex}\\
		&g^{-1}(x) < g^{-1}(\tilde x)~\forall x < \tilde{x} \label{eq:g-increasing}\\
		&(g^{-1})'(x) \le (g^{-1})'(\tilde x)~\forall x < \tilde{x} \label{eq:g-concave}\\
		&g(x), g'(x), f_\mix(x), f(x) > 0~\forall x > 0\label{eq:g-f-positive}\\
		&(f_\mix)'(c) > 0 \label{eq:fmix-increasing}\\
		&\frac{\partial \tilde e_0}{\partial c} > 0 \label{eq:te0-increasing}\\
	\end{align}
	Equations \eqref{eq:gamma-ineq}, \eqref{eq:cost-convex}, \eqref{eq:g-increasing} and \eqref{eq:g-f-positive} imply
	\begin{equation}\label{eq:int-1}
	\cost '( g^{-1} ( \frac{g(\tilde e_0)f_\mix(c)}{f(\theta_\true)\cdot \psi_\A} ) ) \le \cost '( g^{-1} ( \frac{g(\tilde e_0)f_\mix(c)}{f(\theta_\true)\cdot \psi_\B} ) )
	\end{equation}
	Equations \eqref{eq:gamma-ineq}, \eqref{eq:g-concave} and \eqref{eq:g-f-positive} imply
	\begin{equation}\label{eq:int-2}
	(g^{-1})' ( \frac{g(\tilde e_0)f_\mix(c)}{f(\theta_\true)\cdot \psi_\A} ) \le (g^{-1})' ( \frac{g(\tilde e_0)f_\mix(c)}{f(\theta_\true)\cdot \psi_\B} )
	\end{equation}
	Equations \eqref{eq:gamma-ineq}, \eqref{eq:g-f-positive},  \eqref{eq:te0-increasing} and \eqref{eq:fmix-increasing} imply
	\begin{equation}\label{eq:int-3}
	\frac{g(\tilde e_0)(f_\mix)'(c) + f_\mix(c)g'(\tilde e_0) \frac{\partial \tilde e_0}{\partial c}}{f(\theta_\true)\cdot \psi_\A}  < \frac{g(\tilde e_0)(f_\mix)'(c) + f_\mix(c)g'(\tilde e_0) \frac{\partial \tilde e_0}{\partial c}}{f(\theta_\true)\cdot \psi_\B}
	\end{equation}
	Intermediate equations \eqref{eq:int-1}, \eqref{eq:int-2} and \eqref{eq:int-3} together imply that $\frac{\partial }{\partial c} \swelf^\A(\theta_\true) > \frac{\partial }{\partial c}\swelf^\B(\theta_\true) $. This gives $  \frac{\partial \welfgap(\theta_\true)}{\partial c} > 0$ as desired.
\end{proof}

\noindent\textbf{Proof of \Cref{prop:access}}
\begin{proof}
	For a two-level policy with cutpoint $c >0$, we have, by definition:
	\begin{align*}
		\access &= \frac{\rho}{1-c}\left( 1 - \thresB{c}\right). 
	\end{align*}
	Since $\thresB{c}  > c$ for any $c\in (0,1-\rho]$, we have that $\access < \rho$. On the other hand, $\access = 0$ for a pure randomization policy.
	
	Taking derivative of $\access$ with respective to $c$,
	\begin{align*}
		\frac{\partial\access  }{\partial c} &=  \frac{\rho}{1-c}\left(-\frac{\partial }{\partial c}\thresB{c}\right) + \frac{\rho}{(1-c)^2} \left( 1 -  \thresB{c} \right)\\
		&= \frac{\rho}{1-c}\left(-\frac{\partial }{\partial c}\thresB{c} + \frac{1-\thresB{c} }{1-c} \right). 
	\end{align*}
	We want to show that the above derivative is non-positive under the assumption that $f^{-1}$ is convex.
	Since $\frac{1-\thresB{c} }{1-c} \le 1$ (because $\thresB{c}  \le c$ for any $c\in [0,1-\rho]$), it suffices to show that $\frac{\partial }{\partial c}\thresB{c} \ge 1$. Indeed, we may compute
	\begin{align*}
		\frac{\partial }{\partial c}\thresB{c} &= \frac{(f^{-1})'\left(\frac{f_\mix(c)}{\psi_\B}\right) }{\psi_\B} \cdot \left(  \frac{1}{2}\cdot \frac{(f^{-1})'\left(\frac{f_\mix(c)}{\psi_\A}\right) }{\psi_\A} + \frac{1}{2} \cdot \frac{(f^{-1})'\left(\frac{f_\mix(c)}{\psi_\B}\right) }{\psi_\B}   \right)^{-1} \\
		&\ge 1,
	\end{align*}
	since $\frac{(f^{-1})'\left(\frac{f_\mix(c)}{\psi_\B}\right) }{\psi_\B} \ge \frac{(f^{-1})'\left(\frac{f_\mix(c)}{\psi_\A}\right) }{\psi_\A}$ by the convexity of $f^{-1}$ (which implies $(f^{-1})'$ is non-decreasing) and the fact that $\psi_\A > \psi_\B$. 
	
\end{proof}


\section{Proofs for Section~\ref{sec:multi-dim}}\label{app:sec5}
~	
%
\noindent\textbf{Proof of \Cref{prop:linearg}}
	\begin{proof}
		The students are sorted according to their weighted scores:
		\begin{align*}
			\combscore &= \sum_{i=1}^m \alpha_i g(\e{i}) \cdot \f{i}(\theta_\pre^i) \\
			&=h \sum_{i=1}^m  \e{i} \cdot \left(\alpha_i\f{i}(\theta_\pre^i) \right).
		\end{align*}
		Consider a single applicant. Denote $i^* := \argmax_{i} \alpha_i\f{i}(\theta_\pre^i)$. Since $\f{i}$'s are strictly monotone and $\theta_\pre^i$ are independently distributed, $i^*$ is almost everywhere unique. For any $e > 0$, let $\e{i^*}^* = e$ and $\e{i}^* = 0$ for each $i \ne i^*$. Then, holding the effort levels of all other applicants fixed, we have
		\begin{align*}
			\iwelf(\{ \e{i}^* \}_{i=1}^m, \lambda(\theta_\post^\alpha)) \ge \iwelf(\{ \e{i}\}_{i=1}^m, \lambda(\theta_\post^\alpha)) ~\forall \{ \e{i}\}_{i=1}^m ~s.t. \sum_{i=1}^m e^i = e.
		\end{align*}
		In other words, for any fixed $\sum_{i=1}^m e^i = e$, the applicant maximizes their welfare by putting all effort into the skill $i^*$ with the highest $\alpha^i\theta_\pre^i$. Therefore, it suffices for the applicant to maximize their individual welfare over the total effort level $e$. At equilibrium, the weighted score $\combscore$ satisfies \[\combscore = h \cdot e \cdot v_\pre^\alpha.\]
		
		Since $\lambda$ is non-decreasing in the weighted score $\combscore$, we may now retrace the proof of Proposition~\ref{lem:rankpreservedindex} to argue the following: if a student with $v_\pre^\alpha$ finds it optimal to achieve weighted score $\combscore$, then each student with $\overline{v_\pre^\alpha}$ finds it optimal to reach weighted score  $\overline{\combscore}\ge\combscore$, due to the convexity of the effort cost function $\cost(e)$.
	\end{proof}

\noindent\textbf{Proof of \Cref{prop:2levels-fixedcap}}
\begin{proof}
	We prove the result for $g(e_0)  = 0$ to simplify the presentation. The generalization to $g(e_0)$ can be accomplished by some additional book-keeping without modifying the core idea.
	
	Note that in this setting we have $\tilde e_0=\cost^{-1}(\ell_1) =\cost^{-1}\left(\frac{\rho}{1-c}\right) $.
	
	For the measurable skill:
	\begin{align*}
		\E[v^M \mid Z=1] &= g(\tilde e_0)\cdot f(c).
	\end{align*}

	For the unmeasurable skill:
	\begin{align*}
		\E[v^U \mid Z=1] &=  \E[g(B-e_1(\potM_\pre))\cdot f(\potM_\pre) \mid Z=1]\\
		&= \E\left[g(B-g^{-1}\left( \frac{g(\tilde e_0)f(c)}{f(\potM_\pre)} \right))\cdot f(\potU_\pre) \mid \potM_\pre > c\right]\\
		&= \E[\potU_\pre]\cdot \int_c^1 g(B-g^{-1}\left( \frac{g(\tilde e_0)f(c)}{f(\potM_\pre)} \right))d\potM_\pre \tag{independence of $\potU_\pre$ and $\potM_\pre$}
	\end{align*}
	
	We know that $\frac{\partial}{\partial c}\E[v^M \mid Z=1] > 0$ because $\tilde e_0$ is increasing in $c$.
	
	By the Leibniz integral rule, we have
	\begin{align*}
		\frac{\partial}{\partial c}  \E[v^U \mid Z=1] &= \E[\potU_\pre] \cdot \left( - g(B-\tilde e_0) + \int_c^1  \frac{\partial}{\partial c}  g(B-g^{-1}\left( \frac{g(\tilde e_0)f(c)}{f(\potM_\pre)} \right))d\potM_\pre \right)< 0,
	\end{align*}
	since $g(B-g^{-1}\left( \frac{g(\tilde e_0)f(c)}{f(\potM_\pre)} \right))$ is decreasing in $c$.
	
	For any $c \in (0, 1-\rho)$, set \begin{equation*}
		\alpha = \frac{-\frac{\partial}{\partial c}  \E[v^U \mid Z=1]}{\frac{\partial}{\partial c}\E[v^M \mid Z=1]-\frac{\partial}{\partial c}  \E[v^U \mid Z=1]}.
	\end{equation*}
	Then we have that \[  \frac{\partial}{\partial c} \Util = \alpha \frac{\partial}{\partial c}\E[v^M \mid Z=1] + (1-\alpha)  \frac{\partial}{\partial c}  \E[v^U \mid Z=1]=0. \] That is, the school's utility $\Util$ for the chosen $\alpha$ is maximized at an interior skill of $c$.
	
\end{proof}

\bibliographystyle{abbrvnat}

\bibliography{mybib}

\vfill

\end{document}